\tikzset{
-{Latex[length=1.8mm, width=1.4mm]}, 
every initial by arrow/.append style={anchor/.append style={shape=coordinate}},
every state/.style={thick, fill=white, minimum size=8mm},
every loop/.append style=-{Latex[length=1.8mm, width=1.4mm]},
initial text=,
initial distance=0.5cm,
line width=0.7pt,
node distance=18mm,
minimum size=4mm,
}
\definecolor{silver}{rgb}{0.75, 0.75, 0.75}
\definecolor{slategray}{rgb}{0.44, 0.5, 0.56}
\newcommand{\mymacro}[1]{{#1}}
\newcommand{\fapprox}[1]{\widehat{#1}}
\newcommand{\defn}[1]{\textbf{#1}}
\newcommand{\blackqed}{\hfill$\blacksquare$}
\definecolor{mintgreen}{RGB}{152, 255, 152}
\newcommand*\iftodonotes{\if@todonotes@disabled\expandafter\@secondoftwo\else\expandafter\@firstoftwo\fi}  %
\definecolor{ETHBlue}{RGB}{33,92,175}	% blue 
\definecolor{ETHGreen}{RGB}{98,115,19}		% green
\definecolor{ETHPurpleDark}{RGB}{140,10,89}	% purple
\definecolor{ETHPurple}{RGB}{163,7,116}	% purple
\definecolor{ETHGray}{RGB}{111,111,111}	% gray
\definecolor{ETHRed}{RGB}{183,53,45}	% red
\definecolor{ETHPetrol}{RGB}{0,120,148}	% green/blue 
\definecolor{ETHBronze}{RGB}{142,103,19}	% bronze
\colorlet{ETHdarkblue}{ETHBlue!80!black}
\colorlet{ETHdarkgreen}{ETHGreen!80!black}
\colorlet{ETHpink}{ETHPurple}
\colorlet{ETHgray}{ETHGray}
\colorlet{ETHred}{ETHRed}
\colorlet{ETHgreenblue}{ETHPetrol}
\colorlet{ETHbrown}{ETHBronze}
\colorlet{MacroColor}{ETHPetrol}
\colorlet{MACROCOLOR}{MacroColor}
\newtheorem{observation}[lemma]{Observation}
\spnewtheorem{nclaim}{Claim}{\itshape}{\rmfamily}
\algrenewcommand\algorithmicindent{1.0em}%
\newcommand{\rightcomment}[1]{{\color{gray} \(\triangleright\){\footnotesize\textit{#1}}}}
\algrenewcommand{\algorithmiccomment}[1]{\hfill \rightcomment{#1}}  % redefines \Comment
\algnewcommand{\LinesComment}[1]{\State {\color{black!50!green}\rightcomment{\parbox[t]{.95\linewidth-\leftmargin-\widthof{\(\triangleright\) }}{#1}}}}
\algnewcommand{\LineComment}[1]{\State {\color{black!50!green}\smaller \(\triangleright\) \parbox[t]{\linewidth-\leftmargin-\widthof{\(\triangleright\) }}{\it #1}\smallskip}} % multi-line comment
\algnewcommand{\InlineComment}[1]{\hfill {\color{black!50!green}\(\triangleright\) {\scriptsize \it #1}}}
\algrenewcommand\algorithmicindent{1.0em}%
\algrenewcommand\alglinenumber[1]{{\tiny\color{black!50}#1.}\hspace{-2pt}}
\newcommand{\algorithmicfunc}[1]{\textbf{def} {#1}:}
\newcommand{\range}[1]{{\mymacro{\mathcal{R}\!\left(#1\right)}}}
\newcommand{\field}{{\mymacro{ \mathbb{F}}}}
\newcommand{\B}{{\mymacro{ \mathbb{B}}}}
\newcommand{\norm}[1]{{\mymacro{ \left\lVert #1 \right\rVert}}}
\newcommand{\pclass}{{\mymacro{\textup{\textsf{P}}}}}
\newcommand{\pspace}{{\mymacro{\textup{\textsf{PSPACE}}}}}
\newcommand{\npspace}{{\mymacro{\textup{\textsf{NPSPACE}}}}}
\newcommand{\transmonoid}[1]{{\mymacro{\mathbb{S}(#1)}}}
\newcommand{\Boolean}{{\mymacro{\mathbb{B}}}}
\newcommand{\rank}[1]{{\mymacro{\mathrm{rank}(#1)}}}
\newcommand{\alphabet}{{\mymacro{ \Sigma}}}
\newcommand{\kleene}[1]{{\mymacro{#1^*}}}
\def\1{\mathbf{1}}
\def\eps{{\mymacro{ \varepsilon}}}
\newcommand{\str}{{\mymacro{\boldsymbol{w}}}}
\newcommand{\strx}{{\mymacro{\boldsymbol{x}}}}
\newcommand{\stry}{{\mymacro{\boldsymbol{y}}}}
\newcommand{\strz}{{\mymacro{\boldsymbol{z}}}}
\newcommand{\sym}{{\mymacro{w}}}
\newcommand{\syma}{{\mymacro{a}}}
\newcommand{\symb}{{\mymacro{b}}}
\newcommand{\symc}{{\mymacro{c}}}
\newcommand{\set}[1]{{\mymacro{\left\{ #1 \right\}}}}
\newcommand{\setcomplement}[1]{{\mymacro{#1^{\textsf{c}}}}}
\newcommand{\N}{{\mymacro{ \mathbb{N}}}}
\newcommand{\nstates}{{\mymacro{ |\states|}}}
\newcommand{\nsymbols}{{\mymacro{ |\alphabet|}}}
\newcommand{\graph}{\mymacro{ \mathcal{G}}}
\newcommand{\automaton}{{\mymacro{ \mathcal{A}}}}
\newcommand{\detautomatonprime}{{\mymacro{ {\automaton'}_{\textsc{det}}}}}
\newcommand{\mooreautomatonprime}{{\mymacro{ {\automaton'}_{\textsc{moore}}}}}
\newcommand{\ssautomaton}{{\mymacro{ \automaton_{\textsc{ss}}}}}
\newcommand{\ssautomatonprime}{{\mymacro{ \automaton'_{\textsc{ss}}}}}
\newcommand{\ssstates}{{\mymacro{ \states_{\textsc{ss}}}}}
\newcommand{\sstrans}{{\mymacro{ \trans_{\textsc{ss}}}}}
\newcommand{\ssfinal}{{\mymacro{ \final_{\textsc{ss}}}}}
\newcommand{\stateq}{{\mymacro{ q}}}
\newcommand{\states}{{\mymacro{ Q}}}
\newcommand{\trans}{{\mymacro{ \delta}}}
\newcommand{\apath}{{\mymacro{ \boldsymbol \pi}}}
\newcommand{\initial}{{\mymacro{ I}}}
\newcommand{\final}{{\mymacro{ F}}}
\newcommand{\languagefsa}[1]{{\mymacro{\mathcal{L} ( #1 )}}}
\newcommand{\mooreautomaton}{{\mymacro{ \automaton_{\textsc{moore}}}}}
\newcommand{\mfautomaton}{{\mymacro{ \automaton_{\textsc{mf}}}}}
\newcommand{\universalautomaton}{{\mymacro{ \automaton_{\textsc{univ}}}}}
\newcommand{\fsatuple}{{\mymacro{ \left( \alphabet, \states, \initial, \final, \trans \right)}}}
\newcommand{\edgenoweight}[3]{#1 \xrightarrow{#2} #3}
\newcommand{\powerstate}{{\mymacro{ \mathcal{Q}}}}
\newcommand{\initpowerstate}{{\mymacro{ \mathcal{Q}_{\initial}}}}
\def\gJ{{{\mymacro{ \mathcal{J}}}}}
\def\gR{{{\mymacro{ \mathcal{R}}}}}
\def\sC{{{\mymacro{ \mathcal{C}}}}}
\def\sF{{{\mymacro{ \mathcal{F}}}}}
\def\sI{{{\mymacro{ \mathcal{I}}}}}
\def\sJ{{{\mymacro{ \mathcal{J}}}}}
\def\sT{{{\mymacro{ \mathcal{T}}}}}
\newcommand{\bigO}[1]{{\mymacro{ \mathcal{O}\left(#1\right)}}}
\crefname{section}{Section}{Sections}
\crefname{table}{Table}{Tables}
\crefname{figure}{Figure}{Figures}
\crefname{algorithm}{Algorithm}{Algorithms}
\crefname{equation}{Eq.}{Eqs.}
\crefname{example}{Example}{Examples}
\crefname{fact}{Fact}{Facts}
\crefname{nclaim}{Claim}{Claims}
\crefname{appendix}{Appendix}{Appendices}
\crefname{theorem}{Theorem}{Theorems}
\crefname{reTheorem}{Theorem}{Theorems}
\crefname{aquestion}{Question}{Questions}
\crefname{assumption}{Assumption}{Assumptions}
\crefname{lemma}{Lemma}{Lemmas}
\crefname{reLemma}{Lemma}{Lemmas}
\crefname{proposition}{Proposition}{Propositions}
\crefname{chapter}{Chapter}{Chapters}
\crefname{line}{line}{lines}
\crefname{principle}{Principle}{Principles}
\crefname{definition}{Definition}{Definitions}
\crefname{corollary}{Corollary}{Corollaries}
\crefname{Exercise}{Exercise}{Exercises}
\crefname{observation}{Observation}{Observations}
\begin{document}

\title{A Close Analysis of the Subset Construction} 

%\titlerunning{} %TODO optional, please use if title is longer than one line

\author{Ivan Baburin \and Ryan Cotterell}

\institute{Department of Computer Science,
  ETH Z{\"u}rich, Switzerland
\email{ivan.baburin@inf.ethz.ch}\quad \email{ryan.cotterell@inf.ethz.ch}}

\maketitle

\begin{abstract}
  Given a nondeterministic finite-state automaton (NFA), we aim to estimate the size of an equivalent deterministic finite-state automaton (DFA).
  We demonstrate that computing the state complexity of an NFA within polynomial precision is $\pspace$-hard. 
  Furthermore, we also demonstrate that it is $\pspace$-hard to decide whether the classical subset construction will yield an equivalent DFA with an exponential increase in the number of states. 
  This result implies that making any \textit{a priori} estimate of the running time of the subset construction is inherently difficult. To address this, and to enable forecasting of such exponential blow-up in certain special cases, we introduce the notion of subset complexity, which provides an upper bound on the size of the DFA produced by the subset construction.
We show that the subset complexity can be efficiently bounded above using the cyclicity and rank of the transition matrices of the NFA. This yields a sufficient condition for identifying NFAs that can be efficiently determinized via the subset construction.\looseness=-1

    \keywords{Subset construction \and Determinization \and State complexity} 
\end{abstract}

% \ccsdesc[300]{Theory of computation~Formal languages and automata theory}
% %TODO mandatory: Please choose ACM 2012 classifications from https://dl.acm.org/ccs/ccs_flat.cfm 

\section{Introduction}\label{sec:introduction}
The equivalence in expressivity between nondeterministic finite-state automata (NFAs) and deterministic finite-state automata (DFAs) was one of the first foundational results in the theory of finite-state automata. The original proof by Rabin and Scott~\cite{RabinPowerset} introduced the powerset construction, a general algorithm for converting an arbitrary NFA into an equivalent DFA.
Despite its simplicity and pedagogical value, a major drawback of the powerset construction is that it yields a DFA with exactly $2^n$ states, where $n$ is the number of states in the original NFA. 
In general, the equivalent DFA produced by the powerset construction is \emph{not} minimal and can, thus, often be reduced using standard finite-state minimization techniques~\cite{brzozowski-full}. Moreover, in certain pathological cases, the minimal DFA is \emph{exponentially} smaller than the DFA returned by the powerset construction—meaning the algorithm may perform an exponential amount of unnecessary computation.
A trivial example illustrates this phenomenon: consider an arbitrary $n$-state NFA with \emph{no} final states. In this case, the minimal DFA is the 1-state automaton that accepts the empty language. 
However, the exponential blow-up is unavoidable in the worst case. Specifically, \cite{moore1971bounds} showed the existence of an $n$-state NFA, known as the Moore automaton,\footnote{See \cref{sec:difficulties} for further exposition.} 
for which the \emph{minimal} equivalent DFA requires \emph{exactly} $2^n$ states, i.e., the exponentially large DFA output by the powerset construction is already minimal.
Subsequent research identified other examples of NFAs for which the powerset construction yields the minimal DFA~\cite{BORDIHN20093209}.\looseness=-1

A complete theory of determinization for finite-state automata ought to provide a principled understanding of precisely when the minimal DFA  equivalent to an NFA incurs an exponential blow-up in the number of states, and when it does not. To this end, several alternative determinization algorithms have been proposed, such as the classic subset construction \cite{allauzen2007openfst} and Brzozowski's algorithm~\cite{brzozowski}.
In addition, it has been shown that certain properties of NFAs can be decided \emph{without} performing explicit determinization despite the fact that many naive algorithms for those tasks construct an equivalent DFA.
For example, NFA equivalence and universality can be tested directly~\cite{hacking-universality}. Although these problems are often $\pspace$-hard to decide—universality being a prime example—modern techniques frequently outperform approaches that rely on full determinization~\cite{antichains-tests}.
Nevertheless, our overall understanding remains fragmented, and a comprehensive, unifying theory of determinization continues to elude us.

In this paper, we present new results on forecasting the size of an NFA's equivalent DFA. 
Our primary focus is a novel analysis of the subset construction, a dynamic modification of the powerset construction, which only keeps accessible states. 
The subset construction is the most basic and widely used determinization procedure, implemented in many automata toolkits~\cite{allauzen2007openfst}. It is also relatively well understood---the cases where it results in the minimal DFA were fully characterized by Brzozowski~\cite{brzozowski-full}.
However, Brzozowski's characterization still does not tell us whether an NFA's equivalent minimal DFA blows up exponentially.
In that context, as our first contribution, in \cref{sec:difficulties}, we prove that no algorithm can efficiently predict whether the minimal DFA equivalent to a given NFA will exhibit an exponential increase in the number of states.
Moreover, even approximating such a forecast in a coarse-grained manner is $\pspace$-hard  (\cref{prop:pspace}).
More interestingly, we also show that the same holds when forecasting whether the subset construction will produce an equivalent DFA, non-minimal in general, with exponentially more states (\cref{prop:subset-pspace}).
As our second contribution, we turn to analyzing the size of the DFA produced by the subset construction.
Simply restricting nondeterminism does not prevent exponential blow-up when performing the subset construction: unambiguous NFAs and automata with multiple initial states (but otherwise deterministic structure) can still yield exponentially larger DFAs~\cite{ambiguous-NFA}. 
Prior attempts to bound the blow-up have examined special classes such as homogeneous NFAs~\cite{subsethomogenous} and employed algebraic tools like monoid complexity~\cite{onthefly}, but no general bound for subset construction has emerged.
Thus, in \cref{sec:subset-complexity}, we introduce a simple yet general measure called subset complexity, which provides an upper bound on the size of the DFA produced by the subset construction (\cref{thm:decomposition-bound}), and thereby a bound on the size of the minimal DFA. 
Finally, in \cref{sec:computational}, we propose an efficient estimate for computing the subset complexity in practice (\cref{prop:all-but-one}).
Taken together, we hope these results contribute to a deeper theoretical understanding of the efficiency of determinizing NFAs.\looseness=-1

\section{Background: Automata, Determinization and Monoids}\label{sec:background}
We start the technical portion of the paper by briefly reciting necessary definitions concerning finite-state automata; for a more comprehensive introduction, we refer to \cite{automata-book}.
An \defn{alphabet} $\alphabet$ is a finite, non-empty set of symbols.
With $\kleene{\alphabet}$, we denote the set of all finite words over $\alphabet$.
The symbol $\eps \in \kleene{\alphabet}$ denotes the empty word, i.e., the unique word of length 0.
Two words are placed adjacent to one another, e.g., $\strx \stry$,  denotes the concatenation of those words.
\begin{definition}
    A \defn{finite-state automaton (FSA)} $\automaton = \fsatuple$ is a quintuple where $\alphabet$ is an alphabet, $\states$ is a finite set of states, $\initial \subseteq \states$ is a set of the initial states, $\final \subseteq \states$ is a set of accepting states and $\trans \subseteq \states \times (\alphabet \cup \set{\eps}) \times \states$ is a finite multiset of transitions.
\end{definition}
Note that a transition labeled with $\eps$ can be performed at any point.
The \defn{size} of a finite-state automaton $\automaton =  \fsatuple$ is its number of states; we use the cardinality operator $| \automaton | = \nstates$ to denote the size. 
For the remainder of this article, we assume that all automata are $\eps$-free.
This assumption is without loss of generality as there exist efficient procedures for removing $\eps$-transitions \cite{allauzen2007openfst}. 

\begin{definition}\label{def:path}
    A \defn{path} $\apath = \edgenoweight{\stateq_0}{\sym_1}{\stateq_1} \edgenoweight{}{\sym_2}{\ldots} \edgenoweight{}{\sym_{N}}{\stateq_N}$ is a labeled sequence of states where each label $\sym_n \in \alphabet \cup \{\eps\}$ is a symbol from the alphabet and each $\stateq_n \in \states$ is a state in the FSA.
    We will refer to the concatenation of symbols along the path $\sym_1 \cdots \sym_{N}$ as the path's \defn{yield}. 
\end{definition}

We say that a word $\str \in \kleene{\alphabet}$ is \defn{recognized} by the FSA $\automaton$ if there exists a path from some initial state $\stateq_I \in \initial$ to some final state $\stateq_F \in \final$ with yield $\str$. 
Moreover, we denote with $\languagefsa{\automaton} \subseteq \kleene{\alphabet}$ the \defn{language} (set of all words) recognized by $\automaton$. 
Two FSA $\automaton$ and $\automaton'$ are called \defn{equivalent} if they accept the same language, i.e., if $\languagefsa{\automaton} = \languagefsa{\automaton'}$. 

\begin{definition}\label{def:deterministic}
An FSA \textup{$\automaton = \fsatuple$} is \defn{deterministic} if and only if the initial state is unique, i.e., $|\initial| = 1$, and for every $(\stateq, \sym) \in \states \times \alphabet$ there is exactly one $\stateq' \in \states$ such that $(\stateq, \sym, \stateq') \in \trans$, i.e., the relation corresponds to a total function of type $\states \times \alphabet \rightarrow \states$.
An FSA that is not deterministic is called \defn{nondeterministic}.
\end{definition}
As a shorthand, for any $\str \in \kleene{\alphabet}$, we write $\trans(\stateq, \str) = \trans(\cdots \trans( \trans(\stateq, \sym_1), \sym_2) \cdots, \sym_N)$ to denote an iterative application of the transition function to the word $\str= \sym_1 \cdots \sym_N$.
In general, DFAs are both easier to implement and to use---many properties, e.g., equivalence to another DFA, are hard to decide for general NFAs, but can be decided for deterministic automata in polynomial time \cite{pspace-survey}. 

\begin{definition}
    The \defn{reverse} FSA of an automaton $\automaton$, marked by $\automaton^R$, is an FSA obtained by reversing the direction of all transitions in $\automaton$ and swapping the initial and final states. 
    We call FSA $\automaton$ \defn{co-deterministic} if $\automaton^R$ is deterministic.
\end{definition}

\begin{definition}\label{def:trim}
  A state $\stateq \in \states$ is called \defn{accessible} if there exists a path from $\initial$ to $\stateq$, and \defn{co-accessible} if there exists a path from $\stateq$ to $\final$. An automaton where all states are both accessible and co-accessible is called \defn{trim}.
\end{definition}
Transforming a nondeterministic automaton into an equivalent deterministic one is known as \defn{determinization}.
The most commonly implemented practical algorithm for determinization is the \defn{subset construction} \cite{allauzen2007openfst}, a dynamic version of powerset construction which only keeps track of accessible states; hence, the states in the DFA returned by the subset construction are always a subset of the states in the DFA returned by the powerset construction. \cref{alg:subsetconstruction} provides the pseudocode for an example implementation of subset construction.
The algorithm roughly works as follows.
Given an input NFA $\automaton = \fsatuple$, it explores all accessible subsets of states $\states$ by following the labeled transitions in $\trans$. 
Each new accessible state $\powerstate \subseteq \states$ will be placed on the stack, and for every subset of states $\states$ on the stack, we iteratively search for all of its direct neighbors and place them on the stack until the stack is empty. Algebraically, this can be seen as multiplying the characteristic vector for the initial state $\powerstate_\initial$ with all possible sequences of transition matrices and terminating as soon as no new states can be produced. 
We will refer to the DFA $\ssautomaton$ obtained by applying subset construction on $\automaton$ as the \defn{subset automaton} for $\automaton$.

\begin{algorithm}[t]
\caption{\textbf{(Subset construction)}}\label{alg:subsetconstruction}

\begin{algorithmic}
\Ensure $\automaton = \fsatuple$ 
\State $\initpowerstate \gets \{ \stateq \ | \ \stateq \in \initial \}$
\State $\ssautomaton \gets (\alphabet, \ssstates, \powerstate_{\initial}, \ssfinal, \sstrans)$
\State $\mathsf{stack} \gets \powerstate_\initial$
\State $\ssstates \gets \{ \powerstate_\initial \}$
\While{$|\mathsf{stack}| > 0$}
    \State \textbf{pop} $\powerstate$ from the $\mathsf{stack}$
    \ForAll{$\sym \in \alphabet $}
        \State $\powerstate' \gets \{\stateq' \ | \ (\stateq, \sym, \stateq') \in \trans, \ \stateq \in \powerstate \}$
        \State $\sstrans \gets \sstrans \cup \{\edgenoweight{\powerstate}{\sym}{\powerstate'} \}$
        \If{$\powerstate' \notin \ssstates$}
            \State $\ssstates \gets \ssstates \cup \{ \powerstate' \}$
            \State \textbf{push} $\powerstate'$ on the $\mathsf{stack}$
        \EndIf
    \EndFor
\EndWhile
\State $\ssfinal \gets \{ \powerstate \in \ssstates \ | \ \powerstate \cap \final \neq \emptyset \}$
\State \textbf{return} $\gets \ssautomaton$
\end{algorithmic}
\end{algorithm}

\begin{definition}\label{def:state-complexity}
    Given a finite-state automaton $\automaton$, we call the number of states in its \defn{minimal} equivalent deterministic automaton its \defn{state complexity}.
\end{definition}
Another way of characterizing FSA is by using the algebraic approach and studying the semigroups defined by the underlying transition matrices.

\begin{definition}
    Consider an FSA $\automaton = \fsatuple$ and, for each $\sym \in \alphabet$, define a binary relation ${T}^{(\sym)}$ over $\states$ with $(\stateq, \stateq') \in T^{(\sym)} \Leftrightarrow (\stateq, \sym, \stateq') \in \trans $ for all $\stateq, \stateq' \in \states$. A \defn{transition monoid} $\transmonoid{\automaton}$ is a binary relation monoid over $\states$ generated by $ \{ T^{(\sym)} \ | \ \sym \in \alphabet \}$ and closed under the relation composition operator $\circ$ with the identity relation \textup{$\mathrm{id}$}.\looseness=-1
\end{definition}

It is often more convenient to think of $\transmonoid{\automaton}$ as a monoid of $\nstates \times \nstates$ matrices over the Boolean semifield $\Boolean \coloneqq \langle \{ 0, 1 \}, \vee, \wedge, 0, 1 \rangle$ closed under Boolean matrix multiplication with the identity element $\sI$, such that:
\begin{equation}
    T^{(\sym)}_{i, j} = 1 \Longleftrightarrow (\stateq_i, \stateq_j) \in T^{(\sym)}.
\end{equation}
Notice that the monoid $\transmonoid{\automaton}$ is, by construction, closely related to the regular language $\languagefsa{\automaton}$.
This duality has been extensively studied; for a more detailed overview, we refer to \cite{pin1997syntactic}.

\section{The Complexity of Determinization}\label{sec:difficulties}

As mentioned in \Cref{sec:introduction}, the problem of converting an NFA into an equivalent DFA is difficult because the state complexity of many NFAs can be exponentially large. 
Two classical examples demonstrating this phenomenon are the $n$-state Moore automaton \cite{moore1971bounds} and the $n$-state Meyer--Fischer automaton \cite{meyerfischer}, both of which exhibit a state complexity of $2^n$; see \cref{fig:Moore-NFA}. 

\begin{example}
    The blow-up occurring in a Meyer--Fischer automaton $\mfautomaton$ has a particularly intuitive explanation. 
    Let ${| \str |}_{\syma}$ denote the number of times the symbol $\syma$ appears in a word $\str \in  \kleene{\{\syma, \symb \}}$.
    Then, the language accepted by an $n$-state $\mfautomaton$ can be described as
    \begin{equation}
        \languagefsa{\mfautomaton} = \{ \strx \stry  \ | \ \strx \in \{ \varepsilon \} \cup (\kleene{\{\syma, \symb\}}\symb), \; \stry \in \kleene{\{\syma, \symb\}} \text{ and } {|\stry|}_{\syma} \equiv 0 \pmod{n}\}. 
    \end{equation}
    For every word $\str \in  \languagefsa{\mfautomaton}$, the nondeterministic Meyer--Fischer automaton guesses a prefix that is either empty or ends with $\symb$, such that the number of symbols $\syma$ in the remaining suffix is a multiple of $n$. 
    To accomplish that, any deterministic counterpart of $\mfautomaton$ needs to keep track of all parities modulo $n$ for all occurrences of $\symb$ in $\str$, i.e., exactly $2^n$ possible options.
\end{example}

We emphasize that not all NFAs will exhibit exponential blow-up.
For example, as mentioned in \Cref{sec:introduction}, an NFA with no accepting states can be converted to an equivalent DFA with a single state accepting the empty language. 
Nevertheless, due to the possibility of an exponential blow-up, it immediately follows that there does not exist a polynomial-time (in the size of the NFA) algorithm that outputs the equivalent minimal DFA when given an NFA as input.

The next logical step would be to instead look for an efficient \emph{parameterized} algorithm to find a minimal equivalent DFA, i.e., a polynomial-time algorithm parameterized by the state complexity of the NFA. 
Somewhat surprisingly, this is likely impossible as well.
To see this, consider the following one-state deterministic automaton 
\begin{equation}\label{eq:universal-dfa}
    \universalautomaton \coloneqq (\{\syma, \symb\}, \{ \stateq \}, \{\stateq \}, \{\stateq \}, \{(\stateq, \syma, \stateq), (\stateq, \symb, \stateq) \}),
\end{equation}
which is minimal and accepts the \defn{universal} language $\languagefsa{\universalautomaton} = \kleene{\{\syma, \symb\}}$, i.e., all words over a binary alphabet. 
The subsequent result demonstrates that the minimization becomes intractable even for this specific case.

\begin{figure}[t!]
    \centering
    \begin{tikzpicture}[->,>=stealth',shorten >=0.5pt,auto,node distance=15mm, scale=0.8, transform shape]
        \tikzstyle{every state}=[fill=none,draw=black,text=black,minimum size=0pt,inner sep=2.5pt]
            \node[state, initial] (q1) {$q_1$};
            \node[state, right of=q1] (q2) {$q_2$};
            \node[state, right of=q2] (q3) {$q_3$};
            \node[fill=white, right of=q3] (q?) {$\ldots$};
            \node[state, accepting, right of=q?] (qn) {$q_n$};
            \draw (q1) edge[loop above] node{$b$} (q1)
                  (q1) edge[above] node{$a$} (q2)
                  (q2) edge[above] node{$a, b$} (q3)
                  (q3) edge[above] node{$a, b$} (q?)
                  (q?) edge[above] node{$a, b$} (qn)
                  (qn) edge[bend right=1.7cm, above] node{$a$} (q2)
                  (qn) edge[bend left=1.5cm, below] node{$a$} (q1);
    \end{tikzpicture}
    \hfill
    \begin{tikzpicture}[->,>=stealth',shorten >=0.5pt,auto,node distance=15mm, scale=0.8, transform shape]
        \tikzstyle{every state}=[fill=none,draw=black,text=black,minimum size=0pt,inner sep=2.5pt]
            \node[state, accepting, initial] (q1) {$q_1$};
            \node[state, above right = {12mm} and {3mm} of q1] (q2) {$q_2$};
            \node[state, above right of=q2] (q3) {$q_3$};
            \node[state, right of=q3] (q4) {$q_4$};
            \node[fill=white, below right of=q4] (q?) {$\ldots$};
            \node[state, right = {43mm} of q1] (qn) {$q_n$};

            \draw (q1) edge[in=85,out=115,loop, above] node{$b$} (q1)
                  (q2) edge[loop above] node{$b$} (q2)
                  (q3) edge[loop above] node{$b$} (q3)
                  (q4) edge[loop above] node{$b$} (q4)
                  (qn) edge[in=50,out=80,loop, above] node{$b$} (qn)
                  % a transitions
                  (q1) edge[above left] node{$a$} (q2)
                  (q2) edge[above left] node{$a$} (q3)
                  (q3) edge[above] node{$a$} (q4)
                  (q4) edge[above right] node{$a$} (q?)
                  (q?) edge[above right] node{$a$} (qn)
                  (qn) edge[above] node{$a, b$} (q1)
                  % b transitions
                  (q2) edge[bend left=1cm,left] node{$b$} (q1)
                  (q3) edge[bend left=1cm,left] node{$b$} (q1)
                  (q4) edge[bend left=1cm,above] node{$b$} (q1);
        \end{tikzpicture}
    \caption{An $n$-state Moore automaton $\mooreautomaton$ (left) and an $n$-state Meyer--Fischer automaton $\mfautomaton$ (right). Both of them are nondeterministic with state complexity $2^n$ for every $n \geq 2$.}
    \label{fig:Moore-NFA}
\end{figure}

\begin{theorem}[NFA Universality \cite{pspace-survey}]\label{thm:universality}
    Given an $n$-state NFA $\automaton$, deciding whether \textup{$\languagefsa{\automaton} = \languagefsa{\universalautomaton}$} is \textup{$\pspace$}-complete.
\end{theorem}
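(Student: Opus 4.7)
The plan has two parts: establishing membership in $\pspace$ and then $\pspace$-hardness. For the upper bound, I would argue that \emph{non}-universality of an NFA is in $\npspace$ and invoke Savitch's theorem. Given $\automaton = \fsatuple$, a nondeterministic machine can guess a word $\str$ one symbol at a time while storing only the current subset of reachable states as an $n$-bit characteristic vector; after each guessed symbol it updates the subset using $\trans$ and accepts as soon as the subset is disjoint from $\final$. Since the subset automaton has at most $2^n$ distinct states, every word outside $\languagefsa{\automaton}$ admits a witnessing prefix of length at most $2^n$, so a binary counter of $O(n)$ bits suffices to bound the search and the overall procedure uses $O(n)$ space.

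For hardness, the standard route is a reduction from the acceptance problem of a polynomially space-bounded Turing machine $M$ on input $x$. Fix a space bound $p(|x|)$ and view candidate computation histories as words $C_0 \# C_1 \# \cdots \# C_m$ over an alphabet $\alphabet$, where each $C_i$ has length exactly $p(|x|)$ and encodes tape contents, head position, and state. The goal is to construct, in polynomial time, an NFA $\automaton_{M,x}$ whose language is \emph{exactly} the set of words that are \emph{not} valid accepting computation histories of $M$ on $x$. Then $\languagefsa{\automaton_{M,x}} = \kleene{\alphabet}$ if and only if $M$ rejects $x$, reducing a $\pspace$-complete problem (complementing is free inside $\pspace$) to NFA universality. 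The NFA is a union of small gadgets, one per kind of syntactic defect a bad history may exhibit: wrong initial configuration, a block of incorrect length, absence of an accepting configuration, or a local inconsistency between consecutive configurations $C_i$ and $C_{i+1}$. Each such defect is \emph{local}, so the NFA can catch it by nondeterministically guessing its position in the word and then verifying a constant-size window against the rules of $M$.

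The main obstacle is the adjacent-configuration inconsistency gadget, which must align a guessed symbol $C_{i+1}[j]$ with the window $C_i[j-1]C_i[j]C_i[j+1]$ and flag a violation of $M$'s transition rules. This requires counting exactly $p(|x|)+1$ positions ahead, which a DFA cannot do with polynomially many states; the NFA realizes it with a linear chain of $O(p(|x|))$ states that guesses $j$, stores the three-symbol window in its finite control, advances a fixed number of steps, and then performs a constant-time check. Taking the union of all gadgets yields an NFA of size polynomial in $|M|+|x|$ whose language is $\kleene{\alphabet}$ precisely when no accepting computation exists. Combined with the $\pspace$ upper bound, this establishes $\pspace$-completeness.
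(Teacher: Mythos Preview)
The paper does not actually prove this theorem; it is quoted with a citation and then used as a black box in the reductions of \cref{prop:pspace} and \cref{prop:subset-pspace}. Your sketch is the standard Meyer--Stockmeyer argument and is essentially correct: the $\npspace$ upper bound for non-universality via on-the-fly subset tracking together with Savitch's theorem, and the lower bound via an NFA that recognizes the complement of the set of valid accepting computation histories of a polynomial-space machine. One small mismatch with the paper's formulation is that $\universalautomaton$ is defined over the fixed alphabet $\{\syma,\symb\}$, so the statement as written concerns binary-alphabet NFAs, whereas your reduction produces an NFA over an alphabet whose size depends on $M$; a routine fixed-length binary encoding of that alphabet (with the gadgets adjusted accordingly) closes this gap.
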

Under the assumption that $\pspace \neq \pclass$, we conclude that no parameterized polynomial algorithm for determinization exists---otherwise, we could decide universality by simply executing this algorithm and terminating it after some fixed polynomial number of steps. 
For the same reason, there is no efficient algorithm to find an almost minimal equivalent DFA because DFAs can be minimized efficiently. 
Furthermore, this result also implies that it is likely impossible to devise an algorithm that efficiently determines the state complexity of an NFA.
\begin{corollary}\label{cor:state-complexity}
Determining the state complexity of a given NFA, i.e., the number of states in a minimal equivalent DFA, is \textup{$\pspace$}-hard.
\end{corollary}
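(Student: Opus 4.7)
My plan is to reduce NFA Universality (shown $\pspace$-complete in \cref{thm:universality}) to the problem of computing state complexity, exactly along the lines already sketched in the paragraph preceding the corollary.

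The key observation is the following: the minimal DFA equivalent to an NFA $\automaton$ over the binary alphabet $\{\syma,\symb\}$ has exactly one state if and only if $\languagefsa{\automaton} \in \{\emptyset, \kleene{\{\syma,\symb\}}\}$, since up to isomorphism there are only two one-state DFAs over $\{\syma,\symb\}$ (depending on whether the sole state is accepting). Thus the state complexity of $\automaton$ equals $1$ iff $\automaton$ recognizes either the empty language or the universal language $\languagefsa{\universalautomaton}$.

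First I would argue for polynomial-time decidability of emptiness: given $\automaton$, $\languagefsa{\automaton} = \emptyset$ if and only if no accepting state is reachable from any initial state, which can be checked in time linear in $|\automaton|$ by a standard graph traversal on the underlying transition relation. Then, assuming for contradiction that state complexity is computable in polynomial time, I would obtain the following polynomial-time algorithm for NFA Universality on $\{\syma,\symb\}$: given $\automaton$, compute its state complexity $k$, check emptiness; output ``universal'' iff $k = 1$ and $\languagefsa{\automaton} \neq \emptyset$. Correctness follows from the observation above, and the total running time is polynomial in $|\automaton|$. This would place NFA Universality in $\pclass$, contradicting \cref{thm:universality}, hence computing state complexity is $\pspace$-hard.

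I expect no serious obstacle here; the whole argument is a short reduction. The only mild subtlety is ruling out the ``empty language'' interpretation of state complexity $1$, which is handled by the polynomial-time emptiness check, and ensuring that the two-letter alphabet assumed in \cref{thm:universality} does not cost us anything (it does not, since the universality instance already comes over $\{\syma,\symb\}$ and the reduction is the identity on the input NFA).
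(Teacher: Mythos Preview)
Your proposal is correct and follows essentially the same argument as the paper: reduce NFA universality (\cref{thm:universality}) to computing state complexity by observing that state complexity equals $1$ iff the language is empty or universal, and that emptiness is decidable in polynomial time via graph reachability. The only cosmetic difference is that you phrase it as a proof by contradiction rather than directly as a polynomial-time (Turing) reduction, but the underlying reduction is identical.
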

\begin{proof}
     This follows immediately from \cref{thm:universality}.
     Note that the state complexity of \Cref{eq:universal-dfa} is $1$.
     An NFA with state complexity of $1$ must either be universal or accept the empty language. 
     However, checking the emptiness of an NFA is a simple graph connectivity problem, which can be solved in linear time.
     Thus, given that we can efficiently determine whether an NFA accepts only the empty language, we have obtained a polynomial reduction from determining universality to determining state complexity. \blackqed
\end{proof}
We now relax the original question and ask whether it is possible to find some approximation to the state complexity of a given NFA in polynomial time. Perhaps even more surprisingly, this question remains equally hard as the previous one, even if we allow for arbitrary polynomial slackness in our approximation.
Let us denote with $\mathfrak{A}$ the set of all FSA.
%and define $\sF \colon \mathfrak{A} \rightarrow \N$ some arbitrary quantitative metric. 
To formalize the notion of approximation, we will require an additional definition.

\begin{definition}
    Let $\automaton$ be an arbitrary NFA and let $f \colon \N \rightarrow \N$ be some function satisfying $f(i) \geq i$ for all $i \in \N$. We say that metric $\sF \colon \mathfrak{A} \rightarrow \N$ causes an $f$-\defn{blow-up} for $\automaton$ if\looseness=-1
    \begin{equation}
        \sF(\automaton) > f(|\automaton|).
    \end{equation}
    Furthermore, we say that $\fapprox{m}$ is an $f$-\textbf{approximation} of $\sF$ on $\automaton$ if \begin{equation}
        \sF(\automaton) \leq \fapprox{m} \leq f(\sF(\automaton) + |\automaton|),
    \end{equation}
    i.e., $\fapprox{m}$ is an upper bound with slackness specified by $f$.
\end{definition}
For our purposes, only two particular choices for $\sF$ are relevant: the state complexity of $\automaton$ and the size of automaton produced by the subset construction applied to $\automaton$. \looseness=-1
\begin{proposition}\label{prop:pspace}
   For any polynomial $p$, it is $\pspace$-hard both to compute a $p$-approximation to the state complexity and to decide whether a $p$-blow-up will occur.\looseness=-1
\end{proposition}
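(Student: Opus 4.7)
My plan is to sharpen the reduction underlying \cref{cor:state-complexity} by amplifying the gap between the universal and non-universal cases so that even a polynomial discrepancy in state complexity suffices to decide universality. Starting from an arbitrary NFA $\automaton$ over the binary alphabet $\alphabet = \{\syma, \symb\}$---for which universality is $\pspace$-hard by \cref{thm:universality}---I would construct in polynomial time an NFA $\automaton'$ over $\alphabet \cup \{\#\}$ whose state complexity is a small constant when $\automaton$ is universal and at least $2^k$ otherwise, where $k$ is a polynomial parameter in $|\automaton|$ to be fixed later.

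Concretely, let $\mfautomaton$ denote the Meyer--Fischer NFA on $k$ states (see \cref{fig:Moore-NFA}), whose state complexity equals $2^k$, and define
\begin{equation*}
    \languagefsa{\automaton'} \coloneqq \{\strx \# \stry : \strx, \stry \in \kleene{\alphabet} \text{ and } (\strx \in \languagefsa{\automaton} \text{ or } \stry \in \languagefsa{\mfautomaton})\}.
\end{equation*}
An NFA of size $|\automaton| + k + O(1)$ recognizing this language is easily assembled: at the initial state, nondeterministically branch either into a copy of $\automaton$ (whose accepting states, upon reading $\#$, move to a universally accepting sink that self-loops on $\alphabet$), or into a fresh state that self-loops on $\alphabet$, traverses $\#$, and then simulates $\mfautomaton$. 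If $\automaton$ is universal, then $\languagefsa{\automaton'} = \kleene{\alphabet} \# \kleene{\alphabet}$, whose minimal DFA has only a constant number $C$ of states. If $\automaton$ is not universal, fix any $\strx_0 \notin \languagefsa{\automaton}$; then the residual of $\languagefsa{\automaton'}$ after any prefix $\strx_0 \# \stry$ coincides with the residual of $\languagefsa{\mfautomaton}$ after $\stry$, and by the Myhill--Nerode theorem these prefixes traverse at least $2^k$ distinct equivalence classes as $\stry$ ranges over $\kleene{\alphabet}$.

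Once this gap is in place, the rest is calibration. I would pick $k$ polynomial in $|\automaton|$ (e.g., $k = |\automaton|^2$) large enough that $2^k > p(C + |\automaton'|)$, where $|\automaton'| = O(|\automaton| + k)$; this is always possible because $p$ is a fixed polynomial while $2^k$ is exponential in $k$. A polynomial-time $p$-approximation $\fapprox{m}$ of the state complexity of $\automaton'$ would then satisfy $\fapprox{m} \leq p(C + |\automaton'|)$ in the universal case and $\fapprox{m} \geq 2^k > p(C + |\automaton'|)$ in the non-universal case, yielding a polynomial-time decision procedure for universality. The same threshold shows that the state complexity of $\automaton'$ exceeds $p(|\automaton'|)$ precisely when $\automaton$ is not universal, so the $p$-blow-up decision problem reduces from universality through the same construction. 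I anticipate the main technical care to be the Myhill--Nerode analysis in the non-universal branch: one must verify that the disjunction with $\languagefsa{\automaton}$ does not accidentally collapse residual classes of $\mfautomaton$, which is precisely why the separator $\#$ and the choice of a rejecting prefix $\strx_0 \notin \languagefsa{\automaton}$ are indispensable.
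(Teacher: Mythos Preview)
Your proposal is correct and mirrors the paper's own proof almost exactly: the paper builds the same union $\languagefsa{\automaton}\,\#\,\kleene{\alphabet}\ \cup\ \kleene{\alphabet}\,\#\,L_{\mathrm{hard}}$ (using the Moore automaton in place of Meyer--Fischer and taking $k=|\automaton|$ rather than $|\automaton|^2$) and argues via the same residual-after-$\strx_0\#$ observation to obtain the gap $3$ versus $2^k$. The only cosmetic difference is your explicit calibration of $k$, which the paper handles implicitly by noting that $2^n$ eventually dominates any fixed polynomial in $n$.
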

\begin{proof}
    To show $\pspace$-hardness, we construct a polynomial reduction from the NFA universality problem. 
    Let $n = |\automaton|$ and $m$ be the (unknown) state complexity of $\automaton$.
    Further, assume that there exists an algorithm that produces a $p$-approximation for $m$ (the proof for $p$-blow-up is analogous). 
    We show that we could use such an algorithm to decide whether $\automaton$ accepts the universal language, which is $\pspace$-hard by \Cref{thm:universality}.
    
    Without loss of generality, we assume that $\alphabet = \{\syma, \symb\}$.
    Next, we construct an automaton $\automaton'$ over the alphabet $\alphabet' = \{\syma, \symb, \# \}$ with $2(n+1)$ states using a copy of $\automaton$, a copy of $\mooreautomaton$ and two copies of $\universalautomaton$ as shown in \cref{fig:minimal-pspace}. 
    Specifically, $\automaton'$ is the union of the following two NFAs:\looseness=-1
    \begin{enumerate}
        \item We construct the first automaton as follows.
        Let $\universalautomaton$ be the 1-state minimal DFA that accepts $\kleene{\alphabet}$, and let $\mooreautomaton$ be the $n$-state Moore automaton. 
        Then, concatenate $\universalautomaton$ to  $\mooreautomaton$ by adding a $\#$-labeled transition from unique state in $\universalautomaton$ with the initial state of $\mooreautomaton$. 
        Then, the initial state of $\mooreautomaton$ no longer serves as an initial state, and the single state in $\universalautomaton$ no longer serves as a final state.
        \item We construct the second automaton as follows.
        Let $\universalautomaton$ be another copy of the 1-state minimal DFA that accepts $\kleene{\alphabet}$. 
         Then, concatenate $\automaton$ to  $\universalautomaton$ by adding a $\#$-labeled transition from all final states in $\automaton$ to the initial state of $\universalautomaton$. 
         The unique state in $\universalautomaton$ no longer serves as an initial state, and $\automaton$ no longer has any final states.
    \end{enumerate}
    \begin{figure}[t]
    \centering
    \scalebox{.8}{
        \begin{tikzpicture}[->,>=stealth',shorten >=0.5pt,auto,node distance=15mm,
                            semithick]
        \tikzstyle{every state}=[fill=none,draw=black,text=black,minimum size=0pt,inner sep=1pt]
        
        % main automaton
        \node[state, initial] (p1) {$p_1$};
        \node[state, above right of = p1] (p2) {$p_2$};
        \node[state, right of = p2] (p3) {$p_3$};
        \node[state, right of = p3] (p4) {$p_4$};
        \node[state, below right of = p1] (p5) {$p_5$};
        \node[fill=white, below right =2.1cm and 0.1cm of p4] (pmain){$\automaton$};
        
        \draw (p1) edge[loop below] node{$a, b$} (p1)
              (p1) edge[above left] node{$a$} (p2)
              (p1) edge[above right] node{$b$} (p5)
              (p2) edge[above] node{$a$} (p3)
              (p3) edge[above] node{$a$} (p4)
              (p4) edge[loop below] node{$a, b$} (p4);
    
        % first universal automaton
        \node[state, accepting, below right= {0.75cm} and {2cm} of p4] (pu) {$p_u$};
        \node[fill=white, below =0.5cm of pu] (pmainu) {$\universalautomaton$};
        \draw (pu) edge[loop above] node{$a, b$} (pu);
        
        % Moore automaton
        \node[state, above right = 4cm and 3.8cm of p1] (q1) {$q_1$};
        \node[state, right of=q1] (q2) {$q_2$};
        \node[state, right of=q2] (q3) {$q_3$};
        \node[state, right of=q3] (q4) {$q_4$};
        \node[fill=white, right of=q4] (q?) {$\ldots$};
        \node[state, accepting, right of=q?] (qn) {$q_n$};
        \node[fill=white, below=1.1cm of qn] (qmoore) {$\mooreautomaton$};
        \draw (q1) edge[loop above] node{$b$} (q1)
              (q1) edge[above] node{$a$} (q2)
              (q2) edge[above] node{$a, b$} (q3)
              (q3) edge[above] node{$a, b$} (q4)
              (q4) edge[above] node{$a, b$} (q?)
              (q?) edge[above] node{$a, b$} (qn)
              (qn) edge[bend right=1.6cm, above] node{$a$} (q2)
              (qn) edge[bend left=1.2cm, below] node{$a$} (q1);
    
        % second universal automaton
        \node[state, initial, left = 2cm of q1] (qu) {$q_u$};
        \node[fill=white, below right =0.55cm and -0.25cm  of qu] (qmooreu) {$\universalautomaton$};
        \draw (qu) edge[loop above] node{$a, b$} (qu);
    
        % interconnecting edges (blue)
        \draw (qu) edge[above] node{$\#$} (q1);
        \draw (p4) edge[above right] node{$\#$} (pu);
          
        \node[draw, dashed, inner xsep=6mm, inner ysep = 4mm, fit=(p1) (p2) (p3) (p4) (p5) ] {};
        \node[draw, dashed, inner xsep=4mm, inner ysep = 16mm, fit=(q1) (q2) (q3) (q4) (q?) (qn)] {};
        \node[draw, dashed, inner xsep=6mm, inner ysep = 10mm, anchor=base, fit=(qu)] {};
        \node[draw, dashed, inner xsep=2mm, inner ysep = 10mm, anchor=base, fit=(pu)] {};
        \end{tikzpicture}
    }
    \caption{Construction for automaton $\automaton'$ consisting of a union of two nondeterministic automata: the upper one is a connection of $\universalautomaton$ and $\mooreautomaton$ using symbol $\#$, and the lower one is the connection of $\automaton$ and $\universalautomaton$ using symbol $\#$. In both cases the final states of the first automaton are connected to the initial states of the second one, with the corresponding initial/final labels dropped.} 
    \label{fig:minimal-pspace}
    \end{figure}
    Clearly, the first automaton accepts the language $\kleene{\alphabet}\#\languagefsa{\mooreautomaton}$ while the second accepts the language $\languagefsa{\automaton}\#\kleene{\alphabet}$. %Now, the combined (union) automaton $\automaton'$ accepts the union of these languages. 
    Thus, $\automaton'$ accepts $\kleene{\alphabet}\#\languagefsa{\mooreautomaton}  \cup \languagefsa{\automaton}\#\kleene{\alphabet}$.
    On the one hand, if $\automaton$ is universal, then $\languagefsa{\automaton} = \kleene{\alphabet}$, and the union of these languages is precisely $\kleene{\alphabet}\#\kleene{\alphabet}$.
    Consequently, the state complexity of $\automaton$ is $3$.
    On the other hand, if $\languagefsa{\automaton}$ is not universal, then there exists a word $\strx \in \kleene{\{a, b \}}$ such that $\strx \notin \languagefsa{\automaton}$, meaning that
    \begin{equation*}
        \bigl\{ \strx\# \stry \mid  \stry \in \languagefsa{\mooreautomaton} \bigl\} \subset \languagefsa{\automaton'} \quad \text{and} \quad \bigl\{ \strx\#\stry  \mid \stry \in  \setcomplement{\languagefsa{\mooreautomaton}}\bigl\} \cap \languagefsa{\automaton'} = \emptyset.
    \end{equation*}
    Consider the minimal deterministic automaton for $\automaton'$, which we denote with $\detautomatonprime$. Let $q$ be the state in $\detautomatonprime$ to which the path $\apath$ with the yield $\stry\#$ leads. 
    Now, observe that the right language of $q$ is  $\languagefsa{\mooreautomaton}$.
    Because $\languagefsa{\mooreautomaton}$ has a state complexity $2^{n}$, the state complexity of $\detautomatonprime$ is at least $2^n$.
    Thus, depending on whether $\automaton$ is universal or not, the state complexity of $\automaton'$ will be either $3$ or at least $2^n$. Because $2^n$ is exponentially larger than $3$, any algorithm producing a $p$-approximation for the state complexity of $\automaton'$ would necessarily decide the universality of $\automaton$. 
    An analogous proof holds for deciding $p$-blow-up.
    \blackqed
\end{proof}

\cref{prop:pspace} shows that it is $\pspace$-hard to give any polynomial-time approximation to the blow-up in an NFA's state complexity.
Thus, instead, we relativize our results with respect to a specific determinization procedure which, given an NFA, produces an equivalent DFA, albeit one that is not necessarily minimal.
In this article, we focus on the subset construction, discussed in \Cref{sec:background}.
The question of optimality for subset construction was resolved in \cite{brzozowski-full}, who presented a characterization of all NFAs for which subset construction produces a minimal DFA. 
A well-known special case of this result is given below.
\begin{theorem}[Brzozowski \cite{brzozowski}]\label{thm:brzozowski}
    If an NFA $\automaton$ is trim and co-deterministic, then the subset construction produces the minimal equivalent DFA.
\end{theorem}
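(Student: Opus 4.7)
The plan is to establish that the DFA $\ssautomaton$ produced by the subset construction is minimal by verifying the two standard criteria: every state is accessible, and all pairs of states are pairwise distinguishable. Accessibility is immediate from \cref{alg:subsetconstruction}, which only ever adds accessible subsets to $\ssstates$. So the whole proof reduces to showing pairwise distinguishability of the accessible subsets.

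To show distinguishability, I would exploit co-determinism. Since $\automaton^R$ is a DFA, $\automaton$ has a unique final state $f$, and for every $(\stateq, \sym) \in \states \times \alphabet$ there is exactly one $\stateq'$ with $(\stateq', \sym, \stateq) \in \trans$. For each NFA state $\stateq \in \states$, let $R_{\stateq} \subseteq \kleene{\alphabet}$ denote the right language of $\stateq$, i.e., the set of yields of paths from $\stateq$ to $f$ in $\automaton$. The key claim is that the family $\{R_{\stateq}\}_{\stateq \in \states}$ is pairwise disjoint. The argument is a reversal bijection: a word $\str$ lies in $R_{\stateq}$ if and only if there is a path in $\automaton^R$ from its unique initial state $f$ to $\stateq$ whose yield is $\str$ read backwards. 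Because $\automaton^R$ is deterministic, this destination is uniquely determined by the input word, so no word can lie in two different $R_{\stateq}$'s.

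Trimness now enters: since every state of $\automaton$ is co-accessible, each $R_{\stateq}$ is non-empty. The right language of a subset $\powerstate \in \ssstates$ in $\ssautomaton$ equals $\bigcup_{\stateq \in \powerstate} R_{\stateq}$, a disjoint union of nonempty sets indexed by $\powerstate$. Hence two distinct subsets $\powerstate \neq \powerstate'$ have different right languages: any $\stateq$ in the symmetric difference $\powerstate \triangle \powerstate'$ contributes its nonempty $R_{\stateq}$ to exactly one of the two unions, witnessed by any word in $R_{\stateq}$. Combined with accessibility, this yields minimality of $\ssautomaton$.

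The main obstacle is pinning down the reversal correspondence precisely: one must check carefully that a path from $\stateq$ to $f$ in $\automaton$ with yield $\str$ corresponds bijectively to a path in $\automaton^R$ from $f$ to $\stateq$ whose yield is the reverse of $\str$, and that $f$ is indeed the unique initial state of $\automaton^R$ by virtue of $\automaton^R$ being a DFA. Once this bookkeeping is fixed, the disjointness claim and the distinguishability argument follow cleanly, and no additional machinery beyond trimness and co-determinism is required.
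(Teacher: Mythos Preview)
The paper does not give its own proof of \cref{thm:brzozowski}; it simply cites the result from Brzozowski and moves on. So there is nothing to compare your argument against from the paper's side.

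Your proof is correct and is essentially the classical argument. The two ingredients are exactly the right ones: accessibility of every state in $\ssautomaton$ comes for free from \cref{alg:subsetconstruction}, and pairwise distinguishability follows from the disjointness of the right languages $R_\stateq$. Your derivation of disjointness via determinism of $\automaton^R$ is the standard one: a path $\stateq \to f$ in $\automaton$ with yield $\str$ reverses to a path $f \to \stateq$ in $\automaton^R$ with yield $\str^R$, and determinism of $\automaton^R$ (with its unique initial state $f$, forced by $|\final|=1$) pins down the endpoint $\stateq$ uniquely from $\str^R$. Trimness then guarantees each $R_\stateq$ is nonempty, so any $\stateq$ in the symmetric difference $\powerstate \mathbin{\triangle} \powerstate'$ supplies a distinguishing word. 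The identity $R_\powerstate = \bigcup_{\stateq \in \powerstate} R_\stateq$ for the right language of a subset state is the routine correctness property of the subset construction. Note that your argument also handles the possible appearance of $\emptyset$ as a state in $\ssautomaton$: its right language is empty and hence distinct from that of any nonempty $\powerstate$.

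One minor remark: under the paper's \cref{def:deterministic}, co-determinism forces every state of $\automaton$ to have \emph{exactly} one incoming $\sym$-transition for each $\sym$, but your disjointness argument only uses the ``at most one'' direction; the totality plays no role. This is harmless, just worth noting when you write it up.
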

Even though the subset construction will not produce the minimal DFA for most NFAs, it does have one major advantage---it produces a DFA where all states are accessible.  
A more fine-grained analysis says that, if $\automaton$ is an NFA and $\ssautomaton$ is the (possibly non-minimal) DFA produced by the subset construction, then the running time of \cref{alg:subsetconstruction} is in $\bigO{\nsymbols |\ssautomaton| {|\automaton|}^2}$ if the algorithm is implemented with well-chosen data structures \cite{subsethomogenous}. 
This reduces the question of whether it is possible to efficiently find an equivalent DFA to one of computing the size of the subset automaton.
Clearly, we can decide whether subset construction on $\automaton$ causes an $f$-blow-up by simulating $f(|\automaton|)$ steps of \Cref{alg:subsetconstruction} as each step produces a new state, or the algorithm halts.
However, the running time of this approach is not polynomial, so it is not an efficient procedure.
In other words, the question we care about is whether it is possible \textit{a priori} to find an approximation to the size of subset automaton of $\automaton$ in polynomial time?
\begin{proposition}\label{prop:subset-pspace}
    For any polynomial $p$, it is $\pspace$-hard to both compute a $p$-approximation to the size of the subset automaton or to decide whether a $p$-blow-up will occur.
\end{proposition}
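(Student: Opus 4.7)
The plan is to reduce from NFA universality (\cref{thm:universality}), mirroring the structure of the proof of \cref{prop:pspace}. Given an arbitrary NFA $\automaton$ of size $n$, I would build in polynomial time an NFA $\automaton''$ such that its subset automaton has polynomial size if $\automaton$ is universal and at least $2^n$ states otherwise. A polynomial-time $p$-approximation or $p$-blow-up decision on $\automaton''$ would then separate these two regimes and thereby decide universality of $\automaton$.

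The non-universal branch of this dichotomy is inherited for free from \cref{prop:pspace}. Since the size of the subset automaton always dominates the minimal DFA, any construction that reuses the two-branch template of \cref{prop:pspace}---and thereby pushes the minimal DFA above $2^n$ when $\automaton$ is not universal---automatically carries the $2^n$ lower bound into the subset automaton. The genuine difficulty is the universal case, in which \cref{prop:pspace} guarantees only a constant-size \emph{minimal} DFA, while the subset automaton could \emph{a priori} still blow up through the accessible subsets of either the embedded $\automaton$-copy or the embedded Moore automaton, both of which may independently generate exponentially many subsets regardless of $\lang(\automaton)$.

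To rule out such a universal-case blow-up, the main lever I would use is Brzozowski's theorem (\cref{thm:brzozowski}): if $\automaton''$ is made trim and co-deterministic, then the subset automaton coincides exactly with the minimal DFA, and \cref{prop:pspace}'s constant-vs-exponential minimal-DFA dichotomy transfers verbatim to the subset automaton. Concretely, I would (i) replace the Moore sub-automaton with a co-deterministic NFA of size $O(n)$ whose minimal DFA has $2^n$ states, for instance the reversal of the minimal DFA for the language ``the $n$-th symbol from the end is $a$,'' and (ii) wire the $\automaton$-branch into $\automaton''$ via $\#$-transitions in such a way that every state of the composite has at most one predecessor per symbol.

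The principal obstacle will be enforcing global co-determinism of $\automaton''$ in the presence of the arbitrary, richly backward-branching $\automaton$. A naive embedding of $\automaton$ as a sub-automaton introduces states with multiple predecessors per symbol, immediately breaking the property that Brzozowski's theorem requires. My plan is to funnel $\automaton$ into $\automaton''$ through a single gate state (for instance via a fresh unique initial state and a single interface $\#$-transition from $\automaton$'s final states), so that all backward branching coming from $\automaton$ is localized at that one gate and invisible elsewhere in the construction. Verifying that this wiring simultaneously preserves (a) co-determinism at every non-gate state, (b) trimness, and (c) the universality-based minimal-DFA dichotomy of \cref{prop:pspace}, is the technical heart of the argument; once it is checked, \cref{thm:brzozowski} collapses the subset automaton onto the minimal DFA, and the same $p$-approximation/$p$-blow-up argument as in \cref{prop:pspace} concludes the $\pspace$-hardness.
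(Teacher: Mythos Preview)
Your plan has a genuine gap at precisely the point you flag as ``the principal obstacle.'' You want $\automaton''$ to be globally co-deterministic so that Brzozowski's theorem (\cref{thm:brzozowski}) collapses the subset automaton onto the minimal DFA. But $\automaton$ is an \emph{arbitrary} NFA, and you embed it as a sub-automaton of $\automaton''$. Any internal state of $\automaton$ with two incoming $a$-transitions already violates co-determinism of $\automaton''$, and no amount of gate-state wiring at the \emph{interface} between $\automaton$ and the rest of $\automaton''$ can repair backward branching that lives \emph{inside} $\automaton$. Producing a co-deterministic equivalent of $\automaton$ in polynomial time is exactly determinizing $\automaton^R$, which is the problem we are trying to avoid. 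More directly: since the subset construction on $\automaton''$ must in particular explore the accessible subsets of the embedded $\automaton$, and those can number $2^n$ even when $\automaton$ is universal, your stated goal of a \emph{polynomial-size} subset automaton in the universal case is unattainable with any construction that contains $\automaton$ verbatim.

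The paper sidesteps this by abandoning the polynomial-vs-exponential dichotomy and instead engineering an exponential-vs-much-bigger-exponential gap. It embeds $\automaton$ together with an $n(k{+}1)$-state Meyer--Fischer automaton $\mfautomaton$ (where $k = \deg p$), wiring the $\#$-transitions so that final states of $\automaton$ reach \emph{all} states of $\mfautomaton$ while non-final states reach only $p_1$. Two small observations then drive the argument: the full set $\{p_1,\ldots,p_{n(k+1)}\}$ is a sink in the subset construction of $\mfautomaton$, and $\automaton$ is universal iff every accessible subset of $\automaton$ contains a final state. Hence if $\automaton$ is universal the $\#$-step always lands in the sink and only $O(1)$ states are added to the (at most $2^n$) states coming from $\automaton$; if not, some $\#$-step lands in $\{p_1\}$ and triggers the full $2^{n(k+1)}$ blow-up. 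The resulting gap $2^n + O(1)$ versus $2^{n(k+1)}$ is wide enough to defeat any degree-$k$ approximation, without ever needing co-determinism.
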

\begin{proof}
    To show $\pspace$-hardness of computing a $p$-approximation, we again construct a polynomial reduction from the NFA universality problem; the $p$-blow-up case is analogous.
    Assume that we have access to an algorithm which, given any $\automaton = \fsatuple$, computes a $p$-approximation $\widehat{m}$ to the size of $\automaton$'s subset automaton.
     Let $k$ be the degree of the polynomial $p$ and without loss of generality assume that $\alphabet = \{\syma, \symb\}$. 
     Consider an $n(k+1)$-state Meyer--Fischer automaton $\mfautomaton$ over $\alphabet$.
     As with the $n(k+1)$-state Moore automaton, the $n(k+1)$-state Meyer--Fischer $\mfautomaton$ has a state complexity of $2^{n(k+1)}$.
     However, the Meyer--Fischer automaton also has a unique initial state $p_1$ which is also the unique accepting state, as shown in the top part of \cref{fig:subset-pspace}.
    \begin{nclaim}\label{claim:MF-fixpoint}
        Let $\mfautomaton$ be an $n$-state Meyer--Fischer automaton.
        Consider the state $\powerstate_1 = \{p_1, \ldots, p_n\}$ created by applying the subset construction (\Cref{alg:subsetconstruction}) to $\mfautomaton$.
        Then, all transitions leaving $\powerstate_1$ are self-loops. 
    \end{nclaim}
    Because being in the state $\powerstate_1$ corresponds to simultaneously being in the states $\{p_1, \ldots, p_n\}$ in $\mfautomaton$ (see \cref{fig:subset-pspace}), the claim follows by noting that every state in $\mfautomaton$ has at least one incoming transition with label $\syma$ and one with $\symb$.
    In other words, once you are in $\powerstate_1$, you will always transition back to all states in $\powerstate_1$. 
    Without loss of generality, assume that $\automaton$ has an explicit dead state $\stateq_{\varnothing}$, i.e., if some state $\stateq \in Q$ is missing an explicit transition for some symbol $\sym \in \Sigma$, we augment $\automaton$ with transition $\edgenoweight{\stateq}{\sym}{\stateq_{\varnothing}}$. 
    In other words, we assume that $\automaton$'s transition function is total.\looseness=-1
    \begin{nclaim}\label{claim:all-states}
        An NFA $\automaton$ is universal if and only if every state $\powerstate$ in $\ssautomaton$, the DFA produced by the subset construction (\Cref{alg:subsetconstruction}), contains a final state from $\automaton$.
    \end{nclaim}
    To show the left implication, observe that if all states generated by applying the subset construction contain a final state, then all states in $\ssautomaton$ are accepting by the design of \cref{alg:subsetconstruction}; hence, $\ssautomaton$'s language is universal. 
    To show the right implication, observe that, if there exists a state $\powerstate$ generated by \Cref{alg:subsetconstruction} that does \emph{not} contain a final state, then, because all states in $\ssautomaton$ are accessible, there exists some path $\apath$ with yield $\stry$ from $\powerstate_\initial$ to $\powerstate$, meaning that $\stry \notin \languagefsa{\ssautomaton}$. 
    \begin{figure}[t]
    \centering
    \scalebox{.8}{
    \begin{tikzpicture}[->,>=stealth',shorten >=0.5pt,auto,node distance=15mm,
                        semithick]
    \tikzstyle{every state}=[fill=none,draw=black,text=black,minimum size=0pt,inner sep=1pt]
    
    % main automaton
    \node[state, initial] (p1) {$q_1$};
    \node[state, above right of = p1] (p2) {$q_2$};
    \node[state, right of = p2] (p3) {$q_3$};
    \node[state, accepting, right of = p3] (p4) {$q_4$};
    \node[state, below right of = p1] (p5) {$q_5$};
    \node[state, right of = p5] (p6) {$q_{\text{\scalebox{0.75}{$\varnothing$}}}$};
    \node[fill=white, below right=2.1cm and 0.2cm of p4]{$\automaton$};
    
    \draw (p1) edge[loop below] node{$a, b$} (p1)
          (p1) edge[above left] node{$a$} (p2)
          (p1) edge[above right] node{$b$} (p5)
          (p2) edge[above] node{$a$} (p3)
          (p3) edge[above] node{$a$} (p4)
          (p4) edge[loop below] node{$a, b$} (p4)
          % dead state transitions
          (p2) edge[right] node{$b$} (p6)
          (p3) edge[right] node{$b$} (p6)
          (p5) edge[above] node{$a, b$} (p6)
          (p6) edge[loop right] node{$a, b$} (p6);
    
    % MF NFA
    \node[state, accepting, above right = 3.5cm and 3.8cm of p1] (q1) {$p_1$};
    \node[state, above right = {12mm} and {3mm} of q1] (q2) {$p_2$};
    \node[state, above right of=q2] (q3) {$p_3$};
    \node[state, right of=q3] (q4) {$p_4$};
    \node[fill=white, below right of=q4] (q?) {$\ldots$};
    \node[state, right = {43mm} of q1] (qn) {$p_t$};
    \node[fill=white, below right=0.6cm and 0.1cm of qn]{$\mfautomaton$};
    \draw (q1) edge[in=85,out=115,loop, above] node{$b$} (q1)
          (q2) edge[loop above] node{$b$} (q2)
          (q3) edge[loop above] node{$b$} (q3)
          (q4) edge[loop above] node{$b$} (q4)
          (qn) edge[loop right] node{$b$} (qn)
          % a transitions
          (q1) edge[above left] node{$a$} (q2)
          (q2) edge[above left] node{$a$} (q3)
          (q3) edge[above] node{$a$} (q4)
          (q4) edge[above right] node{$a$} (q?)
          (q?) edge[above right] node{$a$} (qn)
          (qn) edge[above] node{$a, b$} (q1)
          % b transitions
          (q2) edge[bend left=1cm,left] node{$b$} (q1)
          (q3) edge[bend left=1cm,left] node{$b$} (q1)
          (q4) edge[bend left=1cm,above] node{$b$} (q1)
          ;
    
    % interconnecting edges (blue)
    \draw (p1) edge[left, bend left=1.3cm] node{$\#$} (q1)
          (p2) edge[left, bend left=1cm] node{$\#$} (q1)
          (p3) edge[left, bend left=0.4cm] node{$\#$} (q1)
          (p5) edge[left, bend left=0.7cm] node{$\#$} (q1)
          (p6) edge[left, bend left=0.1cm] node{$\#$} (q1);
    
    % interconnecting edges (red)
    \draw (p4) edge[above right, bend right=0.1cm] node{$\#$} (q1)
          (p4) edge[right, bend right=0.4cm] node{$\#$} (q2)
          (p4) edge[right, bend right=0.7cm] node{$\#$} (q3)
          (p4) edge[right, bend right=0.9cm] node{$\#$} (q4)
          (p4) edge[below right, bend right=0.7cm] node{$\#$} (qn);
              
    \node[draw, dashed, inner xsep=6mm, inner ysep = 4mm, fit=(p1) (p2) (p3) (p4) (p5) (p6)] {};
    \node[draw, dashed,inner xsep=8mm, inner ysep = 10mm, fit=(q1) (q2) (q3) (q4) (q?) (qn)] {};
    \end{tikzpicture}
    }
    \caption{The construction of the automaton $\automaton'$ from an arbitrary NFA $\automaton$ with an explicit dead state $q_\varnothing$ and the $n(k+1)$-state Meyer--Fischer automaton $\mfautomaton$.
    All non-final states of $\automaton$ are connected to the initial state $p_1$ of $\mfautomaton$ using symbol $\#$ and all final states are connected to all states of $\mfautomaton$ using the same symbol $\#$.
    In the diagram, $t = n(k+1)$.
    }
    \label{fig:subset-pspace}
    \end{figure}

    Next, we create an automaton $\automaton'$ over the alphabet $\alphabet' = \{\syma, \symb, \#\}$ by conjoining $\automaton$ and $\mfautomaton$ as follows (see \cref{fig:subset-pspace}).
    \begin{enumerate}
        \item We add $\#$-labeled transitions from all nonfinal states of $\automaton$ to $\mfautomaton$'s initial state $p_1$.
        The state $p_1$ is no longer initial in $\automaton'$.
        \item We add $\#$-labeled transitions from all final states in $\automaton$ to all states of $\mfautomaton$.
    \end{enumerate}
    Next, we analyze an application of the subset construction to $\automaton'$ by breaking its computation into two stages.
    In the first stage, we only traverse transitions with symbols $\syma$ and $\symb$.
    This corresponds \emph{exactly} to running the subset construction on $\automaton$.
    After the first stage is completed, i.e., after no new states can be generated, we allow the subset construction to consider other symbols from $\alphabet'$. 
    Note that our decomposition into two stages is little more than an enforcement of a specific ordering of the computation performed by the subset construction when determinizing $\automaton'$---it does not change \emph{what} computation performed.
    We denote the resulting subset automaton as $\ssautomatonprime$.\looseness=-1

    Let us observe the (incomplete) subset automaton after the first stage. 
    We first consider the case where $\automaton$ is universal.
    Due to \cref{claim:all-states}, all states in the (incomplete) subset automaton contain at least one final state. 
    Hence, in the second stage of the subset construction, any $\#$-labeled transition leads to the state $\powerstate_1$, where, due to \cref{claim:MF-fixpoint}, we either remain forever or transition into the state corresponding to $\emptyset$. 
    This means that exactly 2 additional states are produced in the second stage.  
    We next consider the case where $\automaton$ is \emph{not }universal. 
    In this case, due to \cref{claim:all-states}, the subset construction will have produced some state $\powerstate$ such that $\powerstate \cap \final =  \emptyset$.
    Moreover, because we have an explicit dead state, we know that every $\powerstate$ will be non-empty. 
    Thus, there exists some non-final $\stateq \in \powerstate$, and, in the second stage, the subset construction will take the $\#$-labeled transition from $\powerstate$, generating the state $\powerstate' = \{ p_1 \}$. 
    Subsequently, the subset construction will begin to determinize the $n(k+1)$-state Meyer--Fischer automaton $\mfautomaton$, which produces $2^{n(k+1)}$ additional states.
    
    In summary, if $\automaton$ is universal, then the subset automaton $\ssautomatonprime$ will have exactly $2$ more states than $\ssautomaton$, while if $\automaton$ is not universal, it will have at least $2^{n(k+1)}$ more states. 
    Moreover, since $\automaton$ has only $n$ states, the size of $\ssautomaton$ will be at most $2^{n}$. Thus, if $\automaton$ is universal, we have $|\ssautomatonprime| \leq 2^n + 2$. 
    And, if $\automaton$ is not universal, we have $|\ssautomatonprime| \geq 2^{n(k+1)}$. 
    It follows that if $\automaton$ is not universal, any $p$-approximation $\widehat{m}$ (with $p$ having degree $k$) for $\automaton'$ will satisfy
    \begin{equation}
        \widehat{m} \leq p(m + |\automaton'|) \in \bigO{{(m+n(k+2))}^k} \leq \bigO{{(2^{n} + 2 + n(k+2))}^k} < 2^{n(k+1)} 
    \end{equation}
    for sufficiently large values of $n$.
    And, if $\automaton$ is universal, $\widehat{m} \geq 2^{n(k+1)}$.
    By comparing $\widehat{m}$ and $2^{n(k+1)}$, we are able to decide the universality of $\automaton$. 
    An analogous proof holds for $p$-blow-up.
    \blackqed
\end{proof}

While we believe the next result to be known, we were unable to find a formal presentation or reference for it in the literature, hence we state it as an observation and provide a proof sketch.
\begin{observation}
  Given an NFA $\automaton$, determining the state complexity and the size of the subset automaton can be computed in polynomial space. 
  Hence, the corresponding decision problems are in $\pspace$.
\end{observation}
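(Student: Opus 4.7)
The plan is to reduce both quantities to counting subsets of $\states$ that satisfy polynomial-space-checkable predicates. Each subset $\powerstate \subseteq \states$ is storable in $n = |\automaton|$ bits, and both counts are bounded by $2^n$, so the counter itself fits in $n$ bits. By Savitch's theorem, $\npspace = \pspace$, so it suffices to exhibit $\npspace$ subroutines for the required predicates.

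For the size of the subset automaton, I would iterate lexicographically through all $\powerstate \subseteq \states$ and increment a counter whenever $\powerstate$ is accessible from $\initpowerstate$ in $\ssautomaton$. Accessibility is a reachability problem on a graph with $2^n$ vertices, and is solvable in $\npspace$ by guessing a transition sequence one symbol at a time, keeping in memory only the current subset and a step counter bounded by $2^n$. For the state complexity, I would additionally observe that the minimal equivalent DFA arises by quotienting $\ssautomaton$ under Nerode equivalence: two subsets $\powerstate$ and $\powerstate'$ are equivalent iff they have identical right languages. Inequivalence is checkable in $\npspace$ by nondeterministically guessing a distinguishing word symbol by symbol, simultaneously updating the subsets of $\states$ reached from $\powerstate$ and $\powerstate'$ under the consumed prefix, and testing at each step whether exactly one of them intersects $\final$. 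The total count of equivalence classes is then obtained by iterating over subsets and incrementing only for those that are accessible \emph{and} lexicographically smallest in their Nerode class — both checks handled by nested polynomial-space routines.

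The main obstacle I anticipate is making sure the nested $\pspace$ subroutines compose cleanly without accumulating workspace. This is resolved by the standard fact that $\pspace$ is closed under polynomial-space reductions: the workspace used by an inner call is reclaimed upon return, so the simultaneous workspace remains polynomial throughout. Both counts are therefore producible in $\pspace$, which in turn places the associated decision problems (comparing the count against a given threshold $k$ supplied in the input) in $\pspace$. Combined with the hardness results of \cref{prop:pspace} and \cref{prop:subset-pspace}, this establishes that both problems are in fact $\pspace$-complete.
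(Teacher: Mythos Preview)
Your argument is correct. For the size of the subset automaton your approach coincides with the paper's: both enumerate all $2^{\nstates}$ candidate subsets, test accessibility of each by nondeterministically guessing a path from $\initpowerstate$ (an $\npspace$ reachability check on the implicit $2^{\nstates}$-vertex graph), and maintain an $\nstates$-bit counter, invoking Savitch to collapse $\npspace$ to $\pspace$.

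For the state complexity the two arguments diverge. The paper does not spell out a direct algorithm; instead it appeals to the fact that DFA minimization lies in $\mathsf{NL}$ (hence in $\mathsf{polyL}$), which means only polylogarithmically many states of the subset automaton---i.e., polynomially many bits in $\nstates$---need to be held simultaneously while counting equivalence classes. Your route is more elementary and fully self-contained: you count Nerode classes by selecting, for each class, its lexicographically least accessible representative, using an $\npspace$ inequivalence test (guess a distinguishing suffix symbol by symbol) nested inside a deterministic scan over all smaller accessible subsets. The paper's version is shorter because it outsources the work to a cited result; yours makes the space accounting explicit and avoids the external dependency. One small point worth stating more carefully in your write-up: ``lexicographically smallest in its Nerode class'' must be read as smallest among \emph{accessible} subsets, since a non-accessible subset with the same right language could otherwise cause a class to be skipped. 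Your final remark on $\pspace$-completeness is a correct corollary (via \cref{cor:state-complexity} and \cref{prop:subset-pspace}) but goes beyond what the observation itself asserts.
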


\begin{proof}[sketch] Consider the following two decision problems.
(i) For a given automaton $\automaton$ and $k \in \N$, is the size of the subset automaton larger than $k$?  
(ii) For a given automaton $\automaton$ and $k \in \N$, is the state complexity of $\automaton$ larger than $k$? 
We claim that both of these problems are in $\npspace$, and, hence, due to Savitch's theorem, also in $\pspace$ \cite{savitch}.\looseness=-1

First, let $\automaton = \fsatuple$ be an NFA. 
Consider the set of characteristic vectors for the set of states $\states$ in the NFA $\automaton$.
Additionally, consider a $\nstates$-bit counter which stores the number of distinct accessible states in the corresponding subset automaton.\looseness=-1

To demonstrate that (i) is in $\npspace$, we enumerate the $2^{|\states|}$ characteristic vectors in an arbitrary total ordering.
For each vector in the enumeration, consider the corresponding subset of the states $\powerstate \subseteq \states$.
Nondeterministically guess the path from $\powerstate_\initial$ (defined in \Cref{alg:subsetconstruction}) to $\powerstate$ if one exists. 
The nondeterministic guessing procedure has at most $2^{\nstates}$ steps, where, in each step, we either nondeterministically choose a symbol $\sym \in \alphabet$ that leads us to the next state or halt.
In the case that we successfully guessed a path to $\powerstate$, we increase the $\nstates$-bit counter by $1$; otherwise, we keep it unchanged. 
After repeating this for all $2^\nstates$ characteristic vectors, we compare the counter value and $k$, i.e., accept or reject the input.
Thus, (i) is in $\npspace$ and, by Savitch's theorem, in $\pspace$.
    
To demonstrate that (ii) is in $\npspace$, we borrow an idea suggested by Stefan Kiefer \cite{pspace-minimization}.
Because DFA minimization is in $\mathsf{NL}$ \cite{NL-minimization} and, hence, due to Savitch's theorem, in $\mathsf{polyL}$, we are only required to keep track of polynomially many states from subset construction at once to compute the number of equivalence classes, i.e., determine the state complexity.
    \blackqed
\end{proof}

\section{Bounding the State Complexity}\label{sec:subset-complexity}
Due to \cref{prop:pspace} and \cref{prop:subset-pspace}, no reasonable approximation to the state complexity or the size of the DFA produced by the subset construction is possible assuming $\pspace \neq \pclass$.
Thus, an alternative is to look for some easily computable lower and upper bounds and keep them as general as possible.
We focus on bounding the size of the DFA produced by the subset construction (\cref{alg:subsetconstruction}), which directly tells us how hard it would be to determinize the chosen NFA in practice.
Note, however, that, by the results in \Cref{sec:difficulties}, we repeat that such bounds cannot be both (i) arbitrarily good, and (ii) easy to compute for all NFAs.\looseness=-1

One way of achieving such a bound for the subset construction is to consider the transition monoid $\transmonoid{\automaton}$.
It is known that the size of the transition monoid directly translates into an upper bound on the subset automaton. 
In the case of unary, commutative, and dense automata, this fact can be used to produce useful bounds \cite{onthefly}. 
However, it remains unclear how to bound the size of the transition monoid for an arbitrary NFA.

\begin{lemma}[Monoid bound \cite{onthefly}]
    Given an NFA $\automaton$, the DFA produced by the subset construction \textup{$\ssautomaton$} satisfies \textup{$|\ssautomaton| \leq |\transmonoid{\automaton}|$}.
\end{lemma}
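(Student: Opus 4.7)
The plan is to exhibit a surjection from the transition monoid $\transmonoid{\automaton}$ onto the state set $\ssstates$ of the subset automaton, which immediately yields $|\ssautomaton| \leq |\transmonoid{\automaton}|$.

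First, I would formalize the algebraic reformulation of the subset construction that is implicit in the description preceding \cref{alg:subsetconstruction}. Let $\vv_{\initial} \in \Boolean^{\nstates}$ denote the characteristic row vector of the set $\initial$ of initial states, and recall from \cref{sec:background} that for each symbol $\sym \in \alphabet$ the matrix $T^{(\sym)} \in \Boolean^{\nstates \times \nstates}$ encodes the labeled transitions with label $\sym$. The key observation is that, for every word $\str = \sym_1 \cdots \sym_N \in \kleene{\alphabet}$, the subset of $\states$ reached from $\initial$ in $\automaton$ after reading $\str$ is precisely
\begin{equation}
    \powerstate_{\str} \;=\; \vv_{\initial} \cdot T^{(\sym_1)} T^{(\sym_2)} \cdots T^{(\sym_N)},
\end{equation}
where the right-hand side is interpreted as a subset of $\states$ via its nonzero entries. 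Crucially, this depends on $\str$ \emph{only} through the product $M_{\str} \defeq T^{(\sym_1)} \cdots T^{(\sym_N)} \in \transmonoid{\automaton}$, with the empty word $\eps$ mapping to the identity relation $\mathrm{id}$.

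Next, I would define the map $\psi \colon \transmonoid{\automaton} \to \powerset{\states}$ by $\psi(M) = \vv_{\initial} \cdot M$, and prove by induction on the order in which \cref{alg:subsetconstruction} adds states to $\ssstates$ that every $\powerstate \in \ssstates$ lies in the image of $\psi$. The base case is $\powerstate_{\initial} = \psi(\mathrm{id})$, and the inductive step uses that any successor obtained by reading $\sym \in \alphabet$ from $\psi(M)$ equals $\psi(M \cdot T^{(\sym)})$, which again lies in $\transmonoid{\automaton}$ because the monoid is closed under composition. Thus $\ssstates$ is contained in the image of $\psi$, giving $|\ssautomaton| = |\ssstates| \leq |\transmonoid{\automaton}|$.

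The argument is short and I do not foresee a genuine obstacle. The only mild subtleties are (i) checking that the Boolean matrix-vector product truly realizes one step of the subset construction, which is a direct unpacking of the definition $T^{(\sym)}_{\stateq, \stateq'} = 1 \Leftrightarrow (\stateq, \sym, \stateq') \in \trans$, and (ii) ensuring $\mathrm{id} \in \transmonoid{\automaton}$ so that the initial state is covered, which is built into the definition in \cref{sec:background}.
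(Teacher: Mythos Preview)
Your argument is correct. Note, however, that the paper does not supply its own proof of this lemma---it is stated with a citation to \cite{onthefly} and left unproved---so there is nothing in the paper to compare against directly. That said, your approach is exactly the algebraic viewpoint the paper adopts in the surrounding material: the proof of \cref{prop:range} writes each state $\powerstate_J \in \ssstates$ as $J = I^\top T^{(\sym_1)} \cdots T^{(\sym_N)}$, and the proof of \cref{thm:decomposition-bound} similarly factors through elements of a transition (sub)monoid. Your surjection $\psi \colon \transmonoid{\automaton} \to \ssstates$, $M \mapsto \vv_\initial \cdot M$, is the natural way to package this, and the two subtleties you flag (that one step of \cref{alg:subsetconstruction} is Boolean right-multiplication by $T^{(\sym)}$, and that $\mathrm{id} \in \transmonoid{\automaton}$ covers $\powerstate_\initial$) are indeed the only points requiring care.
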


Another natural way to bound the size of subset automaton is to consider the range $\range{T^{(\sym)}}$ for all transition matrices in 
\begin{equation}\label{eq:transition-matrices}
\sT \coloneqq \{ T^{(\sym)} \ | \ \sym \in \alphabet \}.
\end{equation}
Because every state in the subset construction corresponds to a characteristic vector over $\states$, it has to be in the image of some transition matrix. 
Hence, the size of their combined images is an upper bound on the size of the subset automaton.

\begin{lemma}[Range bound]\label{prop:range}
   Let $\automaton$ be an NFA with a set of transition matrices $\sT$. 
   Then the DFA output by the subset construction $\ssautomaton$ satisfies
    \textup{
    \begin{equation}
        |\ssautomaton| \leq 1 +\sum_{\sym \in \Sigma} \left|\range{T^{(\sym)}} \right|,
    \end{equation}
    }
     where $\range{T}$ is the range of the matrix $T$ over the Boolean semifield $\B$. 
\end{lemma}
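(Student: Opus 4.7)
The plan is to observe that, aside from the initial state, every state of $\ssautomaton$ arises as the image of some earlier state under one of the transition matrices in $\sT$, and then to control the count via an elementary union bound over the ranges $\range{T^{(\sym)}}$.

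First, I would identify each state $\powerstate \in \ssstates$ with its characteristic vector $\vv_\powerstate \in \B^{|\states|}$, where $\vv_\powerstate[i] = 1$ iff $\stateq_i \in \powerstate$. The successor assignment in \cref{alg:subsetconstruction}, namely $\powerstate' = \{\stateq' \mid (\stateq, \sym, \stateq') \in \trans,\, \stateq \in \powerstate\}$, is precisely the Boolean matrix--vector product of $\vv_\powerstate$ with $T^{(\sym)}$ (up to the usual row/column convention implicit in the definition of $\range{T^{(\sym)}}$). Hence the characteristic vector of any successor $\powerstate'$ lies in $\range{T^{(\sym)}}$. Since every state of $\ssstates$ other than $\powerstate_\initial$ is produced by exactly one such successor step during the execution of \cref{alg:subsetconstruction}, its characteristic vector lies in $\bigcup_{\sym \in \alphabet} \range{T^{(\sym)}}$.

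The lemma would then follow from the union bound
\[
|\ssautomaton| \;\leq\; 1 + \left|\bigcup_{\sym \in \alphabet} \range{T^{(\sym)}}\right| \;\leq\; 1 + \sum_{\sym \in \alphabet} \left|\range{T^{(\sym)}}\right|,
\]
where the leading $+1$ accounts for $\powerstate_\initial$, which need not itself lie in any single $\range{T^{(\sym)}}$.

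There is no genuine obstacle here: the argument is a direct unfolding of the fact that the subset construction is iterated Boolean matrix multiplication starting from $\vv_{\powerstate_\initial}$, combined with elementary counting. The only subtlety is matching the left- versus right-multiplication convention so that $\range{T^{(\sym)}}$ is exactly the set of possible successor characteristic vectors under symbol $\sym$; this amounts to a one-line verification. Sharper variants could be obtained by retaining the union rather than passing to the sum, or by absorbing $\powerstate_\initial$ whenever it already belongs to one of the ranges, but the stated sum form is clean and already yields the desired upper bound.
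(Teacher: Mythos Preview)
Your proposal is correct and matches the paper's own proof essentially line for line: identify subset states with Boolean characteristic vectors, observe that every non-initial state is the image of some vector under a single $T^{(\sym)}$ and hence lies in $\bigcup_{\sym}\range{T^{(\sym)}}$, then apply the union bound and add $1$ for $\powerstate_\initial$. The only cosmetic difference is that the paper writes out the full product $I^\top T^{(\sym_1)}\cdots T^{(\sym_N)}$ and then passes to the range of the last factor, whereas you phrase it directly in terms of the one-step successor map; the transpose convention you flag is exactly the identity $\range{T^{(\sym)\top}}=\range{T^{(\sym)}}$ that the paper invokes.
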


\begin{proof}
    Let $\ssautomaton = (\alphabet, \ssstates, \initial, \ssfinal, \sstrans)$. 
    Consider any $\powerstate_J \in \ssstates$
    and let $\sym_1 \cdots \sym_N \in \kleene{\alphabet}$ be a word such that $\powerstate_J = \trans(\powerstate_I, \sym_1 \cdots \sym_N)$
    Then, $\powerstate_J$'s characteristic vector is given by
    \begin{equation}
        J = I^\top T^{(\sym_1)} T^{(\sym_2)} \cdots T^{(\sym_N)}.
    \end{equation}
    Moreover, because this holds for any $J$, we have 
    \begin{equation}
        \ssstates \setminus \{ \powerstate_\initial \} \subseteq  \bigcup_{\sym \in \alphabet} \range{T^{(\sym)\top}} =  \bigcup_{\sym \in \Sigma} \range{T^{(\sym)}}.
    \end{equation}
    Applying the union bound and adding $1$ to account for the initial state $\powerstate_\initial$ concludes the proof.
    \blackqed
\end{proof}

Serendipitously, both the range and monoid bounds can be naturally combined, admitting a more fine-grained perspective on the nondeterministic blow-up, each compensating for the slackness of the other. 
\begin{definition}\label{def:subset-complexity}
   Given an NFA $\automaton = \fsatuple$, we define the \defn{subset complexity} \textup{$\norm{\automaton}$} as
    \begin{equation}
        \norm{\automaton} = \min_{\sJ \subseteq \alphabet} \left(1 + \sum_{\sym \in \alphabet \setminus \sJ} \left|\range{T^{(\sym)}} \right| \right) |\transmonoid{\sJ}|,
    \end{equation}
    where $\transmonoid{\sJ}$ denotes the submonoid generated by transitions $\{ T^{(\sym)} \ | \ \sym \in \sJ \}$.\looseness=-1
\end{definition}
Essentially, subset complexity allows us to split $\automaton$ into subautomata according to whether the symbols on the transitions are in $\sJ$ or $\Sigma \setminus \sJ$, respectively.
Then, we either apply the monoid or the range bound independently.
The subset complexity is then the best possible split.

\begin{theorem}[Subset complexity]\label{thm:decomposition-bound}
  Let $\automaton$ be an NFA and let \textup{$\ssautomaton$} be the DFA output by the subset construction with $\automaton$ as input.
  Then, we have \textup{$|\ssautomaton| \leq \norm{\automaton}$}.
\end{theorem}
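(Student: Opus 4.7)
The plan is to fix any $\sJ \subseteq \alphabet$ and prove
\[ |\ssautomaton| \leq \left(1 + \sum_{\sym \in \alphabet \setminus \sJ} \left|\range{T^{(\sym)}}\right|\right) |\transmonoid{\sJ}|, \]
after which the theorem follows by minimizing over $\sJ$. As observed in the proof of \cref{prop:range}, every accessible state $\powerstate \in \ssstates$ has a characteristic vector of the form $I^\top T^{(\sym_1)} \cdots T^{(\sym_N)}$ for some word $\str = \sym_1 \cdots \sym_N \in \kleene{\alphabet}$, where $I$ is the characteristic vector of $\powerstate_\initial$. The bound will come from classifying these vectors according to the structure of a convenient factorization of $\str$.

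The key step is to split each such $\str$ at the position of its \emph{rightmost} symbol from $\alphabet \setminus \sJ$. If no such symbol occurs, then $\str \in \kleene{\sJ}$ and the resulting vector is $I^\top M$ for some $M \in \transmonoid{\sJ}$ (with $M$ equal to the identity when $\str = \eps$, recovering $\powerstate_\initial$). Otherwise, write $\str = \str_1 \sym \str_2$ with $\sym \in \alphabet \setminus \sJ$ the rightmost such symbol and $\str_2 \in \kleene{\sJ}$; then the vector equals $v^\top M$ for $v^\top \coloneqq I^\top T^{(\str_1)} T^{(\sym)}$ and $M \coloneqq T^{(\str_2)} \in \transmonoid{\sJ}$, where, mirroring the row/column identification used in \cref{prop:range}, $v \in \range{T^{(\sym)}}$.

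Counting distinct vectors across the two cases gives at most $|\transmonoid{\sJ}|$ contributions from the first case (already accounting for $\powerstate_\initial$) and at most $|\range{T^{(\sym)}}|\cdot|\transmonoid{\sJ}|$ contributions from the second case for each $\sym \in \alphabet \setminus \sJ$. A union bound then yields the displayed inequality, and taking the minimum over $\sJ \subseteq \alphabet$ concludes the argument.

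The main obstacle will be the clean bookkeeping of the additive $1$ and a sanity check that the two extremal choices $\sJ = \alphabet$ (where the sum disappears) and $\sJ = \emptyset$ (where $\transmonoid{\sJ} = \{\mathrm{id}\}$) recover the monoid bound and the range bound of \cref{prop:range} exactly; this is what makes the decomposition a genuine simultaneous refinement of both earlier bounds rather than a bookkeeping trick that introduces loose slack.
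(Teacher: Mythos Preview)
Your proposal is correct and follows essentially the same approach as the paper: both fix an arbitrary $\sJ$, decompose each word at its rightmost occurrence of a symbol outside $\sJ$ (equivalently, peel off the maximal suffix in $\kleene{\sJ}$), and count pairs consisting of a range vector and a monoid element. Your handling of the additive $1$ via the case $\str \in \kleene{\sJ}$ is in fact more explicit than the paper's, and your sanity checks at $\sJ = \alphabet$ and $\sJ = \emptyset$ are a nice addition.
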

\begin{proof}
    Let $\sJ \subseteq \alphabet$ be arbitrary. 
    We first show that
    \begin{equation}
        |\ssautomaton| \leq \left(1 + \sum_{a \in \Sigma \setminus \sJ} \left|\range{T^{(\sym)}} \right| \right) |\transmonoid{\sJ}|.
    \end{equation}
    For any word $\strx \in \kleene{\alphabet}$,
    let $\strx = \stry \strz = \stry z_1 \cdots z_M$ be the unique decomposition where $z_m \in \sJ$ and $M$ is maximal.
%    and $\strx \in \{\varepsilon\} \cup (\kleene{\Sigma}\sym_j)$ with $\sym_j \in \alphabet \setminus \sJ$. 
Furthermore, let $\powerstate_J = \trans(\initpowerstate, \strx)$.
  Similar to the proof of \cref{prop:range}, we consider a characteristic vector $J$ for $\powerstate_J$.
    For clarity, we will denote transition matrices corresponding to symbols $\sym \in \sJ$ with $M^{(\sym)}$ and transition matrices which correspond to symbols $\sym \notin \sJ$ with $R^{(\sym)}$.
    Then, the characteristic vector $J$ can be written as
    \begin{equation}
        J = J' M^{(z_{1})} \cdots M^{(z_{M})},
    \end{equation}
    for some $J' \in \bigcup_{\sym \in \alphabet \setminus \sJ} \range{R^{(\sym)}}$.\footnote{Note that $J'$ will, in general, be the product of both $M$ and $R$ matrices.}
    Thus, there are at most $1 + \sum_{\sym \in \Sigma \setminus \sJ} \left|\range{R^{(\sym)}} \right|$ choices for a vector $J'$ and $|\transmonoid{\sJ}|$ choices for the matrix $M^{(z_{1})} \cdots M^{(z_{M})}$.
    Thus, the total number of choices for $J$ is bounded by the corresponding product. The minimum bound follows.\looseness=-1
    \blackqed
\end{proof}

Our notion of subset complexity is a direct improvement over both monoid and range bounds, because it allows us to cherry-pick individual transition matrices according to their properties to achieve the tightest bound. 

\begin{example}
    Consider the $n$-state Moore automaton $\mooreautomaton$ from \cref{fig:Moore-NFA} with state complexity of $2^n$.
    We define a modified $n$-state Moore automaton $\mooreautomatonprime$ as follows.
    Take the $n$-state Moore automaton and replace the labels on the transitions $\edgenoweight{\stateq_n}{c}{\stateq_1}$ and $\edgenoweight{\stateq_n}{c}{\stateq_2}$ with a new symbol $\symc \notin \{a, b\}$. The new automaton $\mooreautomatonprime$ operates over the alphabet $\alphabet = \{\syma, \symb, \symc \}$; it is still nondeterministic.

    \begin{claim}
        For an $n$-state modified Moore automaton $\mooreautomatonprime$, the following inequality holds $\norm{\mooreautomatonprime} \leq 3n^2 + 3n$.
    \end{claim}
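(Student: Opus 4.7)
The plan is to instantiate \cref{thm:decomposition-bound} with the subset $\sJ = \{\syma, \symb\}$, which splits the bound into a range term for $T^{(\symc)}$ and a monoid term for $T^{(\syma)}, T^{(\symb)}$. Concretely, I will show that $|\range{T^{(\symc)}}| \leq 2$ and $|\transmonoid{\{\syma, \symb\}}| \leq n^2 + n$, which together yield $\norm{\mooreautomatonprime} \leq (1 + 2)(n^2 + n) = 3n^2 + 3n$.

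The first bound is immediate. The only state with outgoing $\symc$-transitions in $\mooreautomatonprime$ is $q_n$, and these go to $q_1$ and $q_2$. Hence, for any characteristic row vector $J$, the product $J T^{(\symc)}$ equals $(1, 1, 0, \ldots, 0)$ if $n \in J$ and $\mathbf{0}$ otherwise, so the range has at most two elements.

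The monoid count is where the bulk of the work lies. The key structural observation is that $T^{(\syma)}$ and $T^{(\symb)}$ act identically on $\{q_2, \ldots, q_n\}$—both shift right by one, with $q_n$ a dead end—and differ only on $q_1$, where $\syma$ moves to $q_2$ while $\symb$ loops. Consequently, for any word $w \in \{\syma, \symb\}^k$ the matrix $T^{(w)}$ is fully determined by the pair (length $k$, image of $q_1$): on each $q_i$ with $i \geq 2$ one has $q_i \mapsto q_{i+k}$ (or $\emptyset$ if $i + k > n$), while the image of $q_1$ is $q_1$ when $w = \symb^k$, and otherwise $q_{k - j + 1}$ (or $\emptyset$ if out of range), where $j$ is the number of leading $\symb$'s in $w$. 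Organizing the count by $k$ yields at most $k + 1$ distinct matrices for each $k \in \{0, 1, \ldots, n - 2\}$, and at most $n + 1$ additional matrices for all $k \geq n - 1$ combined: in that regime every $q_i$ with $i \geq 2$ collapses to $\emptyset$, so only the image of $q_1$ varies, ranging over $\{q_1, q_2, \ldots, q_n, \emptyset\}$. Summing gives
\begin{equation*}
|\transmonoid{\{\syma, \symb\}}| \leq \sum_{k=0}^{n-2}(k+1) + (n+1) = \frac{n^2 + n + 2}{2} \leq n^2 + n.
\end{equation*}

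The main obstacle is precisely this monoid enumeration: one must verify that distinct words of a given length frequently collapse to the same matrix (so the contribution per length is linear in $k$ rather than $2^k$), and that matrices produced for different lengths are correctly distinguished, with the large-$k$ regime merged into a single pool of $n + 1$ matrices to avoid double-counting. Both issues are resolved by the structural characterization above.
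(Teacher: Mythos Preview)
Your proof is correct and takes a genuinely different route from the paper's for the monoid bound. Both arguments instantiate \cref{def:subset-complexity} with $\sJ = \{\syma, \symb\}$ and use $|\range{T^{(\symc)}}| = 2$; the divergence is in how $|\transmonoid{\{\syma,\symb\}}| \leq n^2 + n$ is established. The paper asserts $T^{(\syma)} T^{(\symb)} = T^{(\symb)} T^{(\syma)}$ and then counts at most $n+1$ powers of $T^{(\syma)}$ times $n$ powers of $T^{(\symb)}$. That commutativity claim is in fact problematic for $\mooreautomatonprime$: one has $\stateq_1 \xrightarrow{\syma} \stateq_2 \xrightarrow{\symb} \stateq_3$ but $\stateq_1 \xrightarrow{\symb} \stateq_1 \xrightarrow{\syma} \stateq_2$, so the two products disagree in row~$1$. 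Your structural argument sidesteps this entirely. Because $T^{(\syma)}$ and $T^{(\symb)}$ act identically on $\stateq_2, \ldots, \stateq_n$ (both are the partial right shift with $\stateq_n$ a dead end), the matrix $T^{(w)}$ is determined by the pair $(|w|, \text{image of } \stateq_1)$, and your length-stratified count even yields the sharper estimate $|\transmonoid{\{\syma,\symb\}}| \leq \tfrac{n^2+n+2}{2}$, which you then relax to $n^2 + n$ to match the stated inequality. So your approach reaches the same $3n^2 + 3n$ bound while also repairing a gap in the paper's argument.
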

    
    \begin{proof}
        Let $\gJ = \{\syma, \symb\}$ and observe that $T^{(\syma)} T^{(\symb)} = T^{(\symb}T^{(\syma)}$, which means that all matrices in the transition monoid $\transmonoid{\gJ}$ can be obtained from a product sequence of the following form
        \begin{equation*}
            T^{(\symb)}T^{(\symb)}\cdots T^{(\symb)}T^{(\syma)}T^{(\syma)}\cdots T^{(\syma)}.
        \end{equation*}
        Moreover, we have that ${(T^{(\syma)})}^n = 0$, i.e., the zero matrix and ${(T^{(\symb)})}^{n-1} = {(T^{(\symb)})}^n$, meaning that there are at most $n + 1$ distinct powers of $T^{(\sym)}$ and $n$ distinct powers of $T^{(\symb)}$ (counting the empty set). 
        Hence, $|\transmonoid{\sJ}| \leq n(n+1) = n^2 + n$.
        Moreover, we have that $\gR(T^{(\symc)}) = 2$ because $T^{(\symc)}$ is a rank $1$ matrix. 
        Applying \cref{thm:decomposition-bound} concludes the proof.\looseness=-1 
        \blackqed
    \end{proof}
\end{example} 

\section{A Simple Upper Bound}\label{sec:computational}

In this section, we discuss how to efficiently compute the estimates for the subset complexity from \cref{def:subset-complexity}. This requires computing the size of the range space as well as the size of the transition monoid. The size of the range space is known to be equivalent to counting maximal independent sets in bipartite graphs \cite{rangeduality}, which is a $\#\pclass$-complete problem \cite{countingmaximalis}. The monoid size is generally hard to determine.\footnote{One notable exception is when all the matrices are permutations, in which case we use the Schreier--Sims algorithm to compute the size in almost quadratic time \cite{schreier-sims}.}
Despite these obstacles, we can still derive an efficiently computable upper bound for subset complexity. We start by considering the case where our NFA is unary: for some automaton $\automaton_1$ let $A$ be its unique transition matrix. Then the bound on $|\transmonoid{\automaton_1}|$ relies solely on the \defn{precedence graph} $\graph(A)$, i.e., a directed graph with $|\automaton|$ vertices and edges defined by the transition matrix $A$.

\begin{definition}\label{def:cyclicity}
    Given a directed graph $\mathcal{\graph}$ the \defn{cyclicity} of $\graph$, denoted with $c(\graph)$, is the least common multiple of cyclities of all maximal strongly connected components of $\graph$, where the cyclicity of a strongly connected component $\sC$ is the greatest common divisor of all cycles in $\sC$.
\end{definition}

\begin{lemma}[Section 3 in \cite{onthefly}]\label{lem:monoid-one-letter}
    Let $\automaton_1 = (\{ \sym \}, \states_1, \trans_1, \initial_1, \final_1)$ be a $n$-state unary automaton with transition matrix $A$. Then 
    \begin{equation}
        c(\graph(A)) \leq |\transmonoid{\automaton_1}| \leq  c(\graph(A)) + n^2 - 2n + 2.
    \end{equation}
\end{lemma}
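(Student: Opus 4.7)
The plan is to view the unary transition monoid as the orbit of powers of $A$ under Boolean matrix multiplication, exploit the eventual periodicity of this sequence, identify its period with the cyclicity $c(\graph(A))$, and bound its preperiod by $n^2 - 2n + 2$.

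First, I would observe that since the alphabet has a single symbol, $\transmonoid{\automaton_1} = \{A^k : k \geq 0\}$ with $A^0 = I$. Because the set of $n \times n$ Boolean matrices is finite, the sequence $(A^k)_{k \geq 0}$ is eventually periodic: there exist a minimal index $i \geq 0$ and a minimal period $p \geq 1$ such that $A^{i+p} = A^i$, and the matrices $A^0, \ldots, A^{i+p-1}$ are pairwise distinct. Hence $|\transmonoid{\automaton_1}| = i + p$, and the lemma reduces to the two claims $p = c(\graph(A))$ and $i \leq n^2 - 2n + 2$.

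For the equality $p = c(\graph(A))$, I would use the standard correspondence that $(A^k)_{u,v} = 1$ if and only if there is a directed walk of length exactly $k$ from $u$ to $v$ in $\graph(A)$. Decompose $\graph(A)$ into its maximal strongly connected components. Within an SCC $\sC$, the set of closed-walk lengths at any vertex is a numerical semigroup whose gcd is $c(\sC)$ by definition, so by a Schur-type argument it contains all sufficiently large multiples of $c(\sC)$; this property transfers to every pair $(u,v)$ inside $\sC$. Concatenating walks across components and taking least common multiples shows that the eventual period of the entire matrix sequence divides $\mathrm{lcm}\{c(\sC) : \sC \text{ an SCC}\} = c(\graph(A))$. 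The converse inequality — that no smaller period is achievable — follows by exhibiting, within a single SCC, a pair on which any admissible period is forced to be a multiple of $c(\sC)$, giving $p = c(\graph(A))$.

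For the index bound $i \leq n^2 - 2n + 2$, I would invoke a Wielandt-flavored argument on Boolean matrices. The idea is that every entry $(A^k)_{u,v}$ either eventually becomes $0$ forever or eventually enters its periodic tail; once all $n^2$ entries have done so the whole matrix is periodic, so $i$ is bounded by the worst-case ''transient length'' across pairs $(u,v)$. A careful path-counting analysis, essentially the same combinatorics underlying the Wielandt bound on the exponent of a primitive digraph, shows this transient length is at most $(n-1)^2 + 1 = n^2 - 2n + 2$, with the extremal case achieved by a cycle of length $n-1$ joined to an isolated extra vertex. I expect this step to be the main obstacle, since unlike the period it requires quantitative control over how long an entry may keep toggling before stabilizing. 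Combining the two bounds gives $c(\graph(A)) = p \leq i + p = |\transmonoid{\automaton_1}| \leq p + (n^2 - 2n + 2) = c(\graph(A)) + n^2 - 2n + 2$, which is the claim.
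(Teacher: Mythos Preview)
The paper itself does not prove this lemma; it is quoted from \cite{onthefly} with no argument given, so there is no in-paper proof to compare against. Your sketch follows the standard route and is essentially correct: the decomposition $|\transmonoid{\automaton_1}|=i+p$ into index plus period, the identification $p=c(\graph(A))$ via walk-length arithmetic in the strongly connected components, and the Wielandt--Schwarz bound $i\leq (n-1)^2+1$ on the index are exactly the right ingredients and match how the cited reference argues.

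One slip worth fixing: the extremal configuration for the index bound is not ``a cycle of length $n-1$ joined to an isolated extra vertex.'' That graph has index $1$ (the isolated vertex's row and column vanish after one step, and the $(n-1)$-cycle part is already periodic), not $(n-1)^2+1$. The tight example is the Wielandt digraph: an $n$-cycle together with a chord that creates a second cycle of length $n-1$, which is primitive with exponent (hence index) $(n-1)^2+1$.
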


Now we turn our attention to the range of a Boolean matrix and the methods that can be used to bound its size.

\begin{lemma}[Lemma 1 in \cite{boolean-width}]\label{lem:range-bounds}
    Given an $n \times n$ Boolean matrix $A$, let $\rank{A}$ be the rank of $A$ over the Galois field $\field_2$. Then the range of $A$ satisfies
    \begin{equation*}
        \rank{A} \leq |\range{A}| \leq 2^{\rank{A}^2/4 + \bigO{\rank{A}}}.
    \end{equation*}
\end{lemma}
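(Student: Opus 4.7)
The plan is to handle the two bounds separately: the lower bound is elementary, while the upper bound is a subtle combinatorial counting argument that is the technical core of the cited result, and which I would essentially follow from \cite{boolean-width}.

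For the lower bound $\rank{A} \leq |\range{A}|$, let $r = \rank{A}$. By the definition of $\field_2$-rank, $A$ has $r$ columns $c_{i_1}, \ldots, c_{i_r}$ that are linearly independent over $\field_2$. Linear independence forces these columns to be pairwise distinct as vectors in $\B^n$, since any equality $c_{i_j} = c_{i_k}$ would give the nontrivial $\field_2$-relation $c_{i_j} + c_{i_k} = 0$. Each such column is trivially in $\range{A}$, being the image of the corresponding standard basis vector under the Boolean matrix-vector product of $A$. Combining the two observations yields $|\range{A}| \geq r$.

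For the upper bound $|\range{A}| \leq 2^{\rank{A}^2/4 + \bigO{\rank{A}}}$, the key point is that although rank is measured over $\field_2$ while the range is computed over $\B$, the $\field_2$-structure still constrains the Boolean range. Specifically, since the $\field_2$-column space has dimension $r$, the columns of $A$, viewed as Boolean vectors, take at most $2^r$ distinct values in $\B^n$. The Boolean range $\range{A}$ consists of all OR-combinations (set-theoretic unions) of these at most $2^r$ column subsets of $[n]$. A naive count gives $2^{2^r}$, which is hopelessly weak; the sharper bound $2^{r^2/4 + \bigO{r}}$ comes from a finer argument that exploits the $\field_2$-linear structure among the columns to show that only a tiny fraction of the $2^{2^r}$ potential union patterns are actually realizable, with the $r^2/4$ exponent emerging from the size of the restricted family of distinguishable unions.

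I expect the main obstacle to be exactly this sharper counting step: quantifying which unions can coincide requires a careful $\field_2$-based combinatorial analysis, and obtaining the precise exponent $r^2/4$ (rather than something like $2^r$) is the nontrivial content of \cite{boolean-width}. Since the result is peripheral to our contribution---we use it only as a black-box tool to upper-bound the subset complexity---I would cite \cite{boolean-width} directly for the upper bound rather than reproduce its proof, and present the lower bound in a single line as above.
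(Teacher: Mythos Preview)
Your proposal aligns with the paper's treatment: the paper does not prove this lemma at all but simply imports it verbatim from \cite{boolean-width}, so your decision to cite the upper bound rather than reproduce it is exactly what the paper does. Your additional one-line argument for the lower bound is correct and harmless, though the paper omits even that and defers the entire statement to the reference.
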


\begin{remark}
    Other possibilities for bounding the range exist.
    For example, one can consider the size of maximum matching or the notion of Boolean width; see \cite{boolean-width} for in-depth discussion.
\end{remark}

Using the simplified bounds for range on symbols $\alphabet \setminus \{ \syma \}$ and monoid size on the remaining symbol $\syma$ we get a simple upper bound on the subset complexity for any automaton $\automaton$. 
\begin{proposition}[All-but-one]\label{prop:all-but-one}
    Given an $n$-state NFA $\automaton = \fsatuple$, an arbitrary target symbol $\syma \in \alphabet$ and any $\varepsilon > 0$, the following holds
    \begin{equation*}
        \norm{\automaton} \in \bigO{\nsymbols(c(\graph(T^{(\syma)})) + n^2)  \max_{\symb \in \alphabet \setminus \{ \syma \}}2^{\rank{T^{(\symb)}}^2/(4 - \varepsilon)}}.
    \end{equation*}

\end{proposition}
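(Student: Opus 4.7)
The plan is to instantiate \cref{def:subset-complexity} with the singleton choice $\sJ = \{\syma\}$, which cleanly separates the target symbol (where we exploit the unary monoid bound) from the rest of the alphabet (where we exploit the range bound). Since $\norm{\automaton}$ is a minimum over all $\sJ$, any specific choice gives an upper bound, so this immediately yields
\begin{equation*}
\norm{\automaton} \;\leq\; \left(1 + \sum_{\symb \in \alphabet \setminus \{\syma\}} \left|\range{T^{(\symb)}}\right|\right) |\transmonoid{\{\syma\}}|.
\end{equation*}

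Next I would bound the two factors separately. For $|\transmonoid{\{\syma\}}|$, the induced subautomaton is unary with transition matrix $T^{(\syma)}$, so \cref{lem:monoid-one-letter} gives $|\transmonoid{\{\syma\}}| \leq c(\graph(T^{(\syma)})) + n^2 - 2n + 2$, which is $\bigO{c(\graph(T^{(\syma)})) + n^2}$. For the range sum, \cref{lem:range-bounds} yields $|\range{T^{(\symb)}}| \leq 2^{\rank{T^{(\symb)}}^2/4 + \bigO{\rank{T^{(\symb)}}}}$ for each $\symb \in \alphabet \setminus \{\syma\}$; upper-bounding the sum by $(\nsymbols - 1)$ times the maximum absorbs the summation into the alphabet-size factor $\nsymbols$, and the additive $1$ is absorbed into the big-O.

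The only nontrivial step is converting the exponent $\rank{T^{(\symb)}}^2/4 + \bigO{\rank{T^{(\symb)}}}$ into the cleaner form $\rank{T^{(\symb)}}^2/(4-\varepsilon)$ claimed in the statement. The key algebraic identity is
\begin{equation*}
\frac{r^2}{4-\varepsilon} - \frac{r^2}{4} \;=\; \frac{\varepsilon r^2}{4(4-\varepsilon)},
\end{equation*}
so the gap between the two exponents grows like $\Omega(\varepsilon r^2)$ and dominates any linear term $Cr$ once $r \geq 4C(4-\varepsilon)/\varepsilon$. Hence for each fixed $\varepsilon > 0$, we have $2^{r^2/4 + Cr} \in \bigO{2^{r^2/(4-\varepsilon)}}$ uniformly in $r$ (the finitely many small-$r$ cases only affect the hidden constant). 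Applying this with $r = \rank{T^{(\symb)}}$ to each term in the maximum gives the claimed expression.

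The main potential obstacle is purely bookkeeping: one must check that the hidden constants in the two big-O expressions multiply cleanly, and that taking the maximum over $\symb$ before absorbing the $2^{\bigO{r}}$ term (rather than after) does not change the asymptotic bound. Since both steps are monotone and the rank is at most $n$, no uniformity issue arises, and combining the two factor bounds with the alphabet-size multiplier $\nsymbols$ completes the argument.
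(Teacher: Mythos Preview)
Your proposal is correct and follows exactly the approach the paper indicates in the sentence preceding the proposition (``Using the simplified bounds for range on symbols $\alphabet \setminus \{ \syma \}$ and monoid size on the remaining symbol $\syma$''); the paper itself does not spell out a formal proof, and your write-up fills in precisely the intended details, including the one nontrivial step of absorbing the additive $\bigO{\rank{T^{(\symb)}}}$ in the exponent into the denominator shift from $4$ to $4-\varepsilon$.
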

This bound has an intuitive interpretation---if all but one of the transition matrices are of low rank, then the subset complexity will be polynomially bounded; the cyclicity is large only in very exceptional cases. 
Note that cyclicity can be computed using a simple breadth-first search, while rank can be calculated using Gaussian elimination over $\field_2$.
%The bounds naturally induce the following hierarchy:
%\begin{equation*}
%    \begin{tikzpicture}
%    \node[align=center, font={\small}]at (8.75,1) {polynomial\\ state \\ complexity};
 %   \node[align=center, font={\small}] at (7.5,1) {$\subset$};
 %   \node[align=center, font={\small}] at (6.25,1) {polynomial\\ subset \\ construction};
%    \node[align=center, font={\small}] at (5,1) {$\subset$};
 %   \node[align=center, font={\small}] at (3.75,1) {polynomial\\ subset\\ complexity};
 %   \node[align=center, font={\small}] at (2.5,1) {$\subset$};
 %   \node[align=center, font={\small}] at (1.25,1) {polynomial\\ all-but-one\\ bound.};
%    \end{tikzpicture}
%\end{equation*}

\section{Conclusion}
Given an NFA, we have attempted to forecast the state complexity of the minimal equivalent DFA as well as the size of the subset automaton. 
In both cases, we show that such forecasting is $\pspace$-hard; in fact, even the approximate forecasting is $\pspace$-hard.
As a potential workaround, we propose subset complexity---a tractable algebraic criterion, which gives us a sufficient condition for when determinization can be performed efficiently.\looseness=-1
%%
%% Bibliography
%%

%% Please use bibtex, 

\bibliographystyle{splncs04}
\bibliography{main}

% \begin{definition}
%     A language $\sL \subseteq \kleene{\alphabet}$ is \defn{recognized by a monoid} $\monoid$ of binary relations over $\states$ if there exists a surjective morphism $\morphismdef : \kleene{\alphabet} \rightarrow \monoid$ and an \defn{accepting subset} $\monoid_{\final} \subseteq \monoid$ such that $\sL = \morphismdef^{-1}(\monoid_{\final})$.\looseness=-1
% \end{definition}

% From the definition above it is immediately visible that $\transmonoid{\automaton}$ recognizes the language $\languagefsa{\automaton}$. Indeed, if we view $\initial$ and $\final$ as Boolean vectors we can define the accepting subset as 
% %
% \begin{equation}
%     \monoidS_{\final} \coloneqq \left\{ M \in \transmonoid{\automaton} \ | \ \initial^\top M \final \neq 0 \right\}
% \end{equation}

% \noindent and a morphism $\morphismdef$ for an arbitrary word $\str = \sym_1\sym_2\sym_3 \ldots \sym_m \in \kleene{\alphabet}$
% %
% \begin{equation}
%     \morphism{\str} = \sI \cdot \prod^m_{i=1} T^{(\sym_i)} \in \transmonoid{\automaton}
% \end{equation}

% \noindent with $\morphism{\eps} = \sI$. As a consequence, the inverse $\morphismdef^{-1}(\monoidS_{\final})$ only contains words $\str \in \kleene{\alphabet}$ that are a yield of at least one path from $\initial$ to $\final$ in $\automaton$.

\end{document}